\definecolor{green1}{rgb}{0,0.5,0}
\definecolor{ogreen}{rgb}{0,0.5,0}
\definecolor{magenta}{rgb}{1.0, 0.11, 0.81}
\definecolor{mulberry}{rgb}{0.77, 0.29, 0.55}
\definecolor{xgray}{rgb}{0.9, 0.9, 0.9}
\definecolor{eggplant}{rgb}{0.38, 0.25, 0.32}
\definecolor{electriccrimson}{rgb}{1.0, 0.0, 0.25}
\definecolor{bostonuniversityred}{rgb}{0.8, 0.0, 0.0}
\def \blue{\color{blue}}
\pgfplotsset{compat=1.16}
\newcommand{\temp}{\textcolor{red}}
\newcommand{\master}{main node\xspace}
\newcommand{\worker}{worker\xspace}
\newcommand{\Worker}{Worker\xspace}
\newcommand{\Workers}{Workers\xspace}
\newcommand{\workers}{workers\xspace}
\tikzset{brace/.style={decorate, decoration={brace}},
 brace mirrored/.style={decorate, decoration={brace,mirror}},
}
\newcounter{brace}
\newcommand{\PreserveBackslash}[1]{\let\temp=\\#1\let\\=\temp}
\newcolumntype{C}[1]{>{\PreserveBackslash\centering}p{#1}}
\definecolor{green1}{rgb}{0,0.5,0}
\definecolor{magenta}{rgb}{1.0, 0.11, 0.81}
\definecolor{mulberry}{rgb}{0.77, 0.29, 0.55}
\definecolor{xgray}{rgb}{0.9, 0.9, 0.9}
\def \blue{\color{blue}}
\newcommand{\stragglerun}{\ensuremath \varrho_1}
\newcommand{\stragglertr}{\ensuremath \varrho_2}
\newcommand{\noworkersuntrusted}{\ensuremath n_1}
\newcommand{\noworkerstrusted}{\ensuremath n_2}
\def \bes{\begin{equation*}}
\def \ees{\end{equation*}}
\def \bas{\begin{align*}}
\def \eas{\end{align*}}
\def \be{\begin{equation}}
\def \ee{\end{equation}}
\def \bbm{\begin{bmatrix}}
\def \ebm{\end{bmatrix}}
\def \rvA{\texttt{A}}
\def \rvB{\texttt{B}}
\def \rvC{\texttt{C}}
\def \rvR {\texttt{R}}
\def \rvT{\texttt{T}}
\def \rvX{\texttt{X}}
\def \rvY{\texttt{Y}}
\def \F{\mathbb{F}}
\newcommand{\Aij}{\rvA_{\{i,j\}}}
\newcommand{\Rij}{\rvR_{\{i,j\}}}
\newcommand{\Xij}[2]{\rvX_{\{#1,#2\}}}
\newcommand{\thresh}{\ensuremath{t}}
\newcommand{\sss}{\ensuremath{\mathrm{SecShare}}}
\newcommand{\sparsess}{\ensuremath{\mathrm{SparseSecShare}}}
\newcommand{\cA}{{\cal A}}
\newcommand{\cC}{{\cal C}}
\newcommand{\cP}{{\cal P}}
\newcommand{\cW}{{\cal W}}
\newcommand{\cZ}{{\cal Z}}
\newcommand{\bfx}{{\boldsymbol x}}
\newcommand{\bfA}{{\mathbf A}}
\newcommand{\bfB}{{\mathbf B}}
\newcommand{\bfC}{{\mathbf C}}
\newcommand{\bfP}{{\mathbf P}}
\newcommand{\bfR}{{\mathbf R}}
\newcommand{\bfS}{{\mathbf S}}
\newcommand{\bfT}{{\mathbf T}}
\newcommand{\bfX}{{\mathbf X}}
\newcommand{\rvTask}{\widetilde{\rvA}}
\newcommand{\task}{\widetilde{\bfA}}
\newcommand{\bfXij}{\bfX_{\{i,j\}}}
\newcommand{\I}{\textrm{I}}
\newcommand{\mutinf}{\I_q}
\newcommand{\kl}[2]{\text{D}_{\text{KL}} \left(#1 \Vert #2 \right)}
\newcommand{\entropy}{\textrm{H}_q}
\newcommand{\leakage}[1]{\textrm{L}_#1}
\newcommand{\sleak}{\ensuremath{\varepsilon}}
\newcommand{\relleakage}{\ensuremath{\bar{\sleak}}}
\newcommand{\spars}{\ensuremath{s}}
\newcommand{\stragglernum}{\ensuremath{\sigma}}
\newcommand{\sr}{\ensuremath s_{\bfR}}
\newcommand{\sa}{\ensuremath s_{\bfA}}
\newcommand{\sai}{\ensuremath s_{\bfA_i}}
\newcommand{\sasi}{\ensuremath s_{\task_i}}
\newcommand{\sbm}{\ensuremath s_{\bfB}}
\newcommand{\sar}{\ensuremath s_{\bfA+\bfR}}
\newcommand{\savg}{\ensuremath s_{\text{avg}}}
\newcommand{\sdiff}{\ensuremath s_{\delta}}
\newcommand{\sparsity}{\textrm{S}}
\newcommand{\pz}{p_{1}}
\newcommand{\pzinv}{p_{1}^{\text{inv}}}
\newcommand{\pnz}{p_{3}}
\newcommand{\pext}{p_{2}}
\newcommand{\pnzinv}{p_{2,3}^\text{inv}}
\newcommand{\ptwoinv}{p_{2,3}^\text{inv}}
\newtheorem{theorem}{Theorem}
\newtheorem{lemma}{Lemma}
\newtheorem{proposition}{Proposition}
\newtheorem{construction}{Construction}
\newtheorem{definition}{Definition}
\newtheorem{remark}{Remark}
\newcommand{\Fq}{\mathbb{F}_q}
\newcommand{\Fqstar}{\Fq^\star} 
\newcommand{\pa}[1][a]{\mathrm{P}_{\rvA}}
\newcommand{\pb}[1][b]{\mathrm{P}_{\rvB}}
\newcommand{\paandapr}{\mathrm{P}_{\rvA,\rvA+\rvR}}
\newcommand{\papr}{\mathrm{P}_{\rvA+\rvR}}
\newcommand{\paandr}[1][a]{\mathrm{P}_{\rvA,\rvR}}
\newcommand{\pr}[1][r]{\mathrm{P}_{\rvR}}
\newcommand{\optimizer}{\min\limits_{\mathcal{P}}} %
\newcommandx{\pra}[2][1=r, 2=a]{p_{#1\lvert#2}}
\newcommand{\lossopt}{\mathrm{L}_\text{opt}}
\newcommand{\totalleakage}{\mathrm{L}(\pz,\pzinv,\pext,\pnz,\pnzinv)}
\newcommand{\totalleakagesdiff}{\mathrm{L}(\pz,\pzinv,\pext,\pnz,\pnzinv,\sdiff)}
\newcommand{\loss}{\mathcal{L}(\pz,\pzinv,\pext,\pnz,\pnzinv,\lambda_1,\lambda_2,\lambda_3,\lambda_4)}
\newcommand{\lossabbrev}{\mathcal{L}}
\newcommand{\grad}{\nabla_{\pz,\pzinv,\pnz,\pext,\pnzinv,\lambda_1,\lambda_2,\lambda_3,\lambda_4}}
\newcommand{\srinv}{\sr^\text{inv}}
\newcommand{\sarinv}{\sar^\text{inv}}
\newcommand{\qfac}{\ensuremath{\bar{q}}}
\newcommand{\coeffpolyTheorem}{\ensuremath{\mathsf{c}}}
\newcommand{\sfb}{\ensuremath{\mathsf{b}}}
\newcommand{\sfd}{\ensuremath{\tilde{q}}}
\newcommand{\nrowsA}{\ensuremath{\mathsf{m}}}
\newcommand{\ncolsA}{\ensuremath{\mathsf{n}}}
\newcommand{\nrowsB}{\ensuremath{\mathsf{n}}}
\newcommand{\ncolsB}{\ensuremath{\mathsf{\ell}}}
\newcommand{\nbworkers}{\ensuremath{N}}
\newcommand{\polyanoind}[1][x]{\ensuremath{{f}(#1)}}
\newcommand{\polybnoind}[1][x]{\ensuremath{{g}(#1)}}
\newcommand{\respolynoind}[1][x]{\ensuremath{{h}(#1)}}
\newcommand{\polya}[1][x]{\ensuremath{{f_i}(#1)}}
\newcommand{\polyb}[1][x]{\ensuremath{{g_i}(#1)}}
\newcommand{\eval}[1][j]{\ensuremath{\alpha_{#1}}}
\newcommand{\shares}{\ensuremath{n}}
\newcommand{\coeff}{\ensuremath{\eval}}
\newcommand{\pc}{\ensuremath{p_s}}
\newcommand{\pcinv}{\ensuremath{\pc^\text{inv}}}
\newcommand{\totalleakagestragglers}{\mathrm{L}(\pz,\pzinv,\pc,\pcinv)}
\newcommand{\sd}{\ensuremath{s_d}}\newcommand{\sdinv}{\ensuremath{s_d^\text{inv}}}
\newcommand{\gradstragglers}{\nabla_{\pz,\pzinv,\pc,\pcinv,\lambda_1,\lambda_2,\lambda_3}}
\newcommand{\splitpsmm}{\ensuremath{m}}
\begin{document}
\newlength\figureheight
\newlength\figurewidth

\title{Sparsity and Privacy in Secret Sharing:\\ A Fundamental Trade-Off}

\author{\IEEEauthorblockN{
Rawad Bitar, \emph{Member, IEEE}}, Maximilian Egger, \emph{Student Member, IEEE}, Antonia Wachter-Zeh, \emph{Senior Member, IEEE}, and Marvin Xhemrishi, \emph{Student Member, IEEE}
\vspace{-.3cm}
\thanks{Preliminary results were presented in %
~\cite{sparse_ISIT,sparse_ITW,ISIT23}.} %

\thanks{The authors are with the Institute of Communications Engineering, School of Computation, Information and Technology, Technical University of Munich. Emails: \{rawad.bitar, maximilian.egger, antonia.wachter-zeh, marvin.xhemrishi\}@tum.de.}

\thanks{This work was partially funded by the DFG (German Research Foundation) under Grant Agreements No. WA 3907/7-1 and No. BI 2492/1-1.
}}
\maketitle
\begin{abstract}
This work investigates the design of sparse secret sharing schemes that encode a sparse private matrix into sparse shares. This investigation is motivated by distributed computing, where the multiplication of sparse and private matrices is moved from a computationally weak main node to untrusted worker machines. Classical secret-sharing schemes produce dense shares. However, sparsity can help speed up the computation. We show that, for matrices with i.i.d. entries, sparsity in the shares comes at a fundamental cost of weaker privacy. We derive a fundamental tradeoff between sparsity and privacy and construct optimal sparse secret sharing schemes that produce shares that leak the minimum amount of information for a desired sparsity of the shares. We apply our schemes to distributed sparse and private matrix multiplication schemes with no colluding workers while tolerating stragglers. For the setting of two non-communicating clusters of workers, we design a sparse one-time pad so that no private information is leaked to a cluster of untrusted and colluding workers, and the shares with bounded but non-zero leakage are assigned to a cluster of partially trusted workers. %
We conclude by discussing the necessity of using permutations %
for matrices with correlated entries.

\end{abstract}
\begin{IEEEkeywords} Sparse private matrix multiplication, straggler tolerance, optimal leakage,  information-theoretic privacy.
\end{IEEEkeywords}

\section{Introduction}\label{sec:intro}

Distributed computing became a ubiquitous requirement with the emergence of machine learning applications where an enormous amount of data is processed and, due to limitations of the computation power, forwarded to computing nodes~\cite{importance_distributed_comp}. We consider a {\master/{\worker}} setting, where a central entity called {\em \master} owns a large amount of private data and needs to run intensive computations on it. The intensive computation is split into multiple tasks of potentially smaller complexity that can be assigned to external computation nodes called {\em \workers} to be run in parallel.

Offloading computations to external nodes faces several challenges, such as leaking private information from sensitive data, which could be prohibited by data-protection laws~\cite{GDPR}, and increased overall latency due to straggler behavior, i.e., slow or unresponsive nodes \cite{Deanetal}.

We restrict our attention to distributed  matrix multiplication as being a computationally exhaustive part of several machine learning algorithms, such as support vector machines, principal component analysis, and gradient descent algorithms \cite{seber2012linear, suykens1999least}. A vast literature on the topic shows that coding-theoretic techniques can be used to mitigate the effect of stragglers and to ensure strong information-theoretic privacy of the data, e.g., \cite{Albin, GauriJoshi, Convolutional_codes, Bivarite_non_private, Anton_Alex, FLT_Codes, MatDot, speeding_up_using_codes, yang2018secure, kim2019private, RPM3, Kakar, GASP, LCC, raviv2019private, aliasgari2020private, makkonen2022general, PRAC, Staircase, reent, Burak_private} and the survey in \cite{ulukus2022private}.

However, those techniques %
require ``mixing'' the private data with random matrices (for privacy) to generate the tasks sent to the workers. Hence, all structure of the input matrices, and in particular sparsity, is lost. 
Sparse matrices, i.e., matrices with a relatively large number of zero entries, appear naturally in some machine learning tasks~\cite{madarkar2021sparse}. The structure of sparse matrices is important and can be leveraged by deploying efficient storage and computations strategies \cite{adaptive_sparse_matrix, implementing_sparse_matrix_vector}. 

In this work, we introduce the problem of encoding sparse private matrices into matrices that are still sparse and, when offloaded to workers, maintain information-theoretical privacy guarantees as will be formally defined in the sequel. We explain next the closest related works and highlight our contributions.

\emph{Related work: }Straggler mitigation in non-private distributed matrix multiplication is extensively studied in the literature, e.g.,~\cite{speeding_up_using_codes, Albin, GauriJoshi, Convolutional_codes, Bivarite_non_private, Anton_Alex, FLT_Codes, MatDot}. The idea is to encode the input matrices into new ones, called \emph{tasks}, such that stragglers can be mitigated when assigned to the workers. Sparsity-preserving encoding techniques in distributed matrix multiplication have recently received attention in the scientific community. The authors of~\cite{coded_sparse_mm} propose encoding the input matrices using an LT code~\cite{LT_codes} with a carefully tailored Soliton distribution to generate sparse matrices (tasks). %
In~\cite{coded_sparse_matrix_leverages_partial_stragglers}, the authors design encoding strategies for many settings, including distributing the multiplication of a sparse matrix with a dense vector. Fractional repetition codes are used to preserve a plausible level of sparsity in the tasks and mitigate stragglers. Moreover, an extra layer of non-sparse coded matrices is added to increase the straggler mitigation capability of the proposed codes. In~\cite{das2022unified}, the same authors improve their previous scheme~\cite{coded_sparse_matrix_leverages_partial_stragglers} by introducing a sparser encoding strategy of the last layer. In~\cite{das2023distributed}, the authors propose sparse encoding strategies that account for better numerical stability properties (compared to other existing schemes) and for heterogeneous task allocations (proportional to the {\workers} storage capabilities). In~\cite{RandomKhatriRaoProduct}, the authors propose to use a random Khatri-Rao product to construct a distributed (non-private) matrix-matrix multiplication that is numerically stable. The work~\cite{Krishna_sparse} modifies the construction of~\cite{RandomKhatriRaoProduct} to obtain sparse encoding strategies that have similar numerical stability when the input matrices are sparse. %
The works in~\cite{Lagrange_sparse_coding, Matrix_sparsification} consider the case where a dense input matrix is sparsified to speed up the overall computation.  

In terms of privacy, several encoding strategies that mitigate stragglers and guarantee perfect information-theoretic privacy have been proposed. The literature distinguishes between one-sided privacy, where the {\master} wants to offload the multiplication of two matrices, one of which has to remain private, e.g.,~\cite{reent,LCC, PRAC,Staircase}, and double-sided-privacy where both input matrices must remain private, e.g.,~\cite{yang2018secure, RPM3, GASP, Burak_private, Kakar, On_the_capacity, batch,aliasgari2020private, makkonen2022general}. However, none of the works on private distributed matrix multiplication considered encoding strategies that generate tasks that inherit the sparsity of the input matrices. The reason is that, as we shall show next, perfect information-theoretic privacy requires encoding strategies that completely destroy the sparsity of the input matrices.

Very recently, \cite{CK23} studied the tradeoff between privacy and storage rate in distributed storage.

\emph{Contributions: } We introduce the problem of constructing codes for private distributed matrix multiplication that mitigate stragglers and inherit sparsity from the private input matrices. We start by showing that insisting on perfect information-theoretic privacy does not allow leveraging the beneficial properties of sparse matrices. Therefore, we open the door to allowing sparsity and privacy to co-exist by relaxing the privacy guarantees. We prove the existence of a fundamental trade-off between sparsity and information-theoretic privacy. We optimize this trade-off to design optimal codes for private distributed matrix multiplication with sparsity guarantees.

\emph{Organization: } %
We set the notations in \cref{sec:prelim} and formulate the general problem in \cref{sec:system_model}. In~\cref{sec:sys_model}, we describe the considered system model and our main results. In~\cref{subsec:sparse_ot_pad}, we construct a novel sparse one-time padding strategy and optimize it for given requirements. \cref{sec:sss_polynomial} generalizes into a sparse secret-sharing scheme with an arbitrary number of shares. In~\cref{sec:double_sided}, we explain how to apply the introduced schemes to sparse and private matrix multiplication with stragglers. We discuss practical solutions for sparse matrices with correlated entries in~\cref{sec:discussion} %
and conclude the paper in section~\ref{sec:conclusion}. 

\section{Notation}\label{sec:prelim}
We use uppercase and lowercase bold letters to denote matrices and vectors, e.g., $\bfX$ and $\bfx$, respectively. By $\bfXij$, we denote the $(i,j)$-th entry of a matrix $\bfX$. We use letters in uppercase \emph{typewriter} font for random variables, e.g., $\rvY$. The random variables representing a matrix $\bfX$ and its $(i,j)$-th entry $\bfXij$ are denoted by $\rvX$ and $\Xij{i}{j}$, respectively. A finite field of cardinality $q$ is denoted by $\F_q$ and we use $\F_q^*$ to define its multiplicative group, i.e., $\F_q^* \triangleq \F_q\setminus \{0\}$. Sets are denoted by letters in calligraphic font, e.g., $\mathcal{X}$. For a positive integer $b$, we define the set $[b]\triangleq \{1,2,\dots, b\}$. Given $b$ random variables $\rvY_1,\dots,\rvY_b$ and a set $\mathcal{I}\subseteq [b]$, we denote by $\{\rvY_i\}_{i\in\mathcal{I}}$ the set of random variables indexed by $\mathcal{I}$, i.e., $\{\rvY_i\}_{i\in\mathcal{I}} \triangleq \{\rvY_{i} | i\in \mathcal{I}\}$. For two random variables $\rvX$ and $\rvY$, we denote $\Pr(\rvX = x)$ and $\Pr(\rvX = x\vert \rvY = y)$ by $\Pr_\rvX(x)$ and $\Pr_{\rvX\vert\rvY}(x\vert y)$, respectively. The probability mass function (PMF) of a random variable $\rvX \in \mathbb{F}_q$ is denoted by $\textrm{P}_\rvX = \left[p_1, p_2, \dots, p_q \right]$, i.e., $\Pr_{\rvX}(i) = p_i$ for all $i\in \mathbb{F}_q$. Throughout the paper, we will use $\textrm{H}_q(\rvX)$ and $\textrm{H}_q(\left[p_1, p_2, \dots, p_q \right])$ interchangeably to denote the $q$-ary entropy of the random variable $\rvX \in \mathbb{F}_q$. The $q$-ary mutual information between two random variables $\rvX$ and $\rvY$ is denoted by $\I_q(\rvX; \rvY)$. We will use $\kl{\textrm{P}_{\rvX}}{\textrm{P}_{\rvY}}$ to denote the Kullback-Leibler (KL) divergence between the PMFs of $\rvX$ and $\rvY$.

\section{Problem Formulation}\label{sec:system_model}
We introduce the problem of encoding private and sparse matrices into sparse matrices such that observing a collection of the output matrices reveals little to no information about the input matrices. In other words, the problem can be understood as constructing secret sharing schemes whose output is a collection of sparse matrices.

Formally, let $\bfA_1,\dots,\bfA_k$ be $k$ private matrices of the same dimensions drawn from a finite alphabet $\cA$. An $(n,\thresh,z)$ secret sharing (\sss) scheme with $\thresh = k+z$ is a randomized encoding $\sss: \cA^{k} \times \cA^z \to \cA^n$ that outputs $n$ matrices, called shares. Let $\rvA_1,\dots,\rvA_k$ and $\rvTask_1,\dots,\rvTask_n$ be random variables whose realizations are the private matrices and the output shares, respectively. A secret sharing scheme satisfies the following conditions: \begin{enumerate*}[label={\em \roman*)}] \item (reconstruction) any $\thresh$ shares suffice to reconstruct the input matrices, i.e., $\entropy(\rvA_1,\dots,\rvA_k|\rvTask_{i_1},\dots,\rvTask_{i_\thresh}) = 0$; and \item (perfect privacy) any $z$ out of the $n$ shares reveal no information about the private matrices, i.e., $\mutinf(\rvA_1,\dots,\rvA_k;\rvTask_{i_1},\dots,\rvTask_{i_z}) = 0$.
\end{enumerate*}

The problem with secret sharing schemes is that the shares are dense, i.e., have a relatively small number of zero entries, irrespective of the structure of the input matrices. Let $\sai$ be the sparsity\footnote{The sparsity could be understood in several ways, e.g., the ratio of the number of zero entries to the total number of entries, and will be defined formally when we explain the system model.} of the input matrices $\bfA_i$. Secret sharing schemes output shares with sparsity equal $1/|\cA|$ irrespective of the values of the $\sai\!$'s. 

We first investigate the existence of sparse secret sharing schemes (as defined next), which are of independent interest.

\begin{definition}[Sparse Secret Sharing]
\label{def:sss}
A sparse secret sharing scheme with parameters $(n,\thresh,z;\sleak,\spars_1,\dots,\spars_n)$ is a randomized encoding $\sparsess: \cA^k \times \cA^z \to \cA^n$ that takes as input $k=\thresh - z$ private and \emph{sparse} matrices with sparsity $\sai$, $i=1,\dots,k$, and outputs $n$ \emph{sparse} matrices called \emph{shares} that satisfy the following conditions:
\begin{enumerate}
    \item \emph{Reconstruction:} any $\thresh$ shares suffice to reconstruct the input matrices, i.e., $\entropy(\rvA_1,\dots,\rvA_k|\rvTask_{i_1},\dots,\rvTask_{i_\thresh}) = 0.$
    \item \emph{Privacy:} any $z$ out of the $n$ shares \emph{leak a small amount of information} about the private matrices, i.e., $$\mutinf(\rvA_1,\dots,\rvA_k;\rvTask_{i_1},\dots,\rvTask_{i_z}) \leq \sleak.$$
    \item \emph{Sparsity:} Each share $\task_i$ has a desired sparsity $\sasi = \spars_i$.%
\end{enumerate}
\end{definition}

The observant reader notices immediately that the privacy guarantee of sparse secret sharing schemes is relaxed from the perfect information-theoretic privacy, i.e., zero mutual information between the private matrices and any $z$ shares, to weak information-theoretic privacy that allows what we call an \emph{$\sleak$ leakage}. While perfect privacy is desirable, we shall show that there exists a tension between $\sleak$ and $\sasi$, $i = 1,\dots,n,$ and that if $\sleak = 0$ is desired, then $\sasi \leq 1/|\cA|$ which is satisfied by regular secret sharing schemes.

With sparse secret-sharing schemes, we define the problem of constructing straggler-tolerant sparse and private matrix multiplication schemes.

\begin{definition}[Sparse and Private Matrix Multiplication]
\label{def:spmm}
    A straggler-tolerant sparse and private matrix multiplication scheme with parameters $(\nbworkers,\stragglernum,z,\sleak_1,\sleak_2,\spars_1,\spars_2)$ takes as input two private and sparse matrices $\bfA$ and $\bfB$ and offloads their multiplication to $\nbworkers$ available worker nodes. The matrices are encoded each into $m\geq \nbworkers$ shares, %
    denoted by $\widetilde{\bfA}_1,\dots,\widetilde{\bfA}_m, \widetilde{\bfB}_1.\dots,\widetilde{\bfB}_m$. The shares are grouped into pairs called \emph{tasks} and denoted by $\bfT_{(i,j)} = (\widetilde{\bfA}_i,\widetilde{\bfB}_j)$, for $i,j\in[m]$, which are then assigned to the $\nbworkers$ workers. Denote by $\cW_\ell$ the set of all tasks assigned to worker $\ell$, $\ell \in [\nbworkers]$, i.e., $\cW_\ell \triangleq \{\bfT_{(i,j)} \mid \bfT_{(i,j)} \text{ assigned to worker } \ell\}$. Each worker computes sequentially the multiplication of all $(\widetilde{\bfA}_i,\widetilde{\bfB}_j) \in \cW_\ell$ and sends them back to the \master until either it finishes all its tasks or it is asked to stop computing. Let $\cC_\ell$ be the collection of multiplications sent by \worker $\ell$ to the \master. The scheme satisfies the following conditions:
        \begin{enumerate}
            \item \emph{Straggler tolerance:} The matrix $\bfC = \bfA\bfB$ can be recovered from the collection of any $\nbworkers-\sigma$ sets $\cC_\ell$, i.e., with a slight abuse of notation\footnote{For clarity of presentation, the set $\cC_\ell$ represents also a collection of random variables.} we can write for any distinct $i_1,\dots, i_{N-\sigma} \in \{1,\dots,N\}$
            \begin{equation*}
                \entropy(\rvC\mid \cC_{i_1},\dots,\cC_{i_{\nbworkers-\sigma}}) = 0.
            \end{equation*}
            \item \emph{Privacy:} Any $z$ workers can obtain a small amount of information about the input matrices, i.e.,
            \begin{align*}
                \mutinf(\rvA; \cW_{i_1},\dots,\cW_{i_z}) \leq \sleak_1, & &
                \mutinf(\rvB; \cW_{i_1},\dots,\cW_{i_z}) \leq \sleak_2,
            \end{align*}
            where $0 \leq \sleak_1 \leq \entropy(\rvA)$ and $0 \leq \sleak_2 \leq \entropy(\rvB)$
            \item \emph{Sparsity:} The average sparsity of the tasks in $\cW_\ell$ corresponding to matrices $\bfA$ and $\bfB$ are respectively equal to $\spars_1 \leq \sa$ and $\spars_2 \leq \spars_\bfB$.
        \end{enumerate}
\end{definition}

\section{System Model and Main Results}
\label{sec:sys_model}
\subsection{Sparsity and Private Matrices}

In this work, we restrict our attention to private matrices whose entries are independent and identically distributed.
\begin{definition}[Sparsity level of a matrix]
\label{def:iid} The sparsity level $\sparsity(\rvX)$ of a matrix $\rvX$ with entries independently and identically distributed is equal to the probability of the $(i,j)$-th entry $\rvX_{\{i,j\}}$ of $\rvX$ being equal to $0$, i.e., 
\begin{align*} 
\sparsity(\rvX) = \Pr_{\Xij{i}{j}}(0).
\end{align*}
\end{definition}

The private matrices of the {\master} are assumed to belong to a finite field $\bfA \in \F_q^{\nrowsA \times \ncolsA}$, $\bfB \in \F_q^{\nrowsB \times \ncolsB}$ and have sparsity levels $\sparsity(\bfA) = \sa$ and $\sparsity(\bfB) = \sbm$, where\footnote{For the special case $\sa,\sbm\leq q^{-1}$, the shares of classical secret sharing schemes have larger sparsity $q^{-1}$ and are, thus, enough.} $\sa, \sbm >q^{-1}$. 
The PMF of the entries of $\bfA$ (and $\bfB$) can be expressed as follows 
\begin{equation}
\label{eq:PMF_A}
\Pr_{\Aij}(a) = \begin{cases}
    \sa, & a=0 \\[1ex]
    \dfrac{1-\sa}{q-1}, & a \in \F_q^*
\end{cases}\,.
\end{equation} 

Though limiting, the assumption of the entries being i.i.d.\ serves as a stepping stone toward understanding the trade-off between privacy and sparsity. The goal is for the {\master} to offload the computation $\bfC = \bfA\bfB$ to $\nbworkers$ workers.

\subsection{Collusions and Stragglers}\label{subsec:no_clustering}
The {\workers} are assumed to be \emph{honest-but-curious} and will follow the protocol dictated by the \master. We say that up to $z$ \workers \emph{collude} if any collection of up to $z$ \workers collaborate to infer information about the private matrices from the tasks they have obtained. For the particular case of $z=1$, we say that the \workers do not collude.

\Workers are assigned multiple computational tasks that they compute and send their results sequentially to the \master. A \emph{full straggler} is a \worker that does not return any computation to the \master. A \emph{partial straggler} is a \worker that returns some of its computations to the \master while most of the other \workers have finished all their computations.

\subsection{Information Leakage}\label{Privacy_Measure}
Let the random variables $\rvA$ and $\rvB$ represent the private matrices $\bfA$ and $\bfB$, respectively, and $\cW_\ell$, $\ell \in [\nbworkers]$, denote the collection of random variables representing the tasks assigned to worker $\ell$. We say that the tasks assigned to worker $\ell$ leak $\sleak_\bfA \triangleq \mutinf(\rvA;\cW_\ell)$ amount of information about the matrix $\bfA$. We define the relative leakage by $\relleakage_\bfA \triangleq \mutinf(\rvA;\cW_\ell)/\entropy(\rvA)$, where $0\leq \relleakage_\bfA \leq 1$. %
Perfect information-theoretic privacy corresponds to $\sleak_\bfA = \relleakage_\bfA = 0$ and no privacy corresponds to $\sleak_\bfA = \entropy(\rvA)$ or $\relleakage_\bfA = 1$. The same can be defined for $\bfB$.

\begin{remark}[Operational meaning]
While the operational meaning of perfect information-theoretic privacy is clear: an eavesdropper with unbounded computational power cannot do better than guessing; a non-zero leakage has more ambiguity in its operational meaning. 
We are interested in understanding the fundamental theoretical trade-off between information-theoretic privacy and sparsity. 
In practice, a leakage of $\sleak>0$  means that any eavesdropper can learn some partial information about the private data. %
\end{remark}

\subsection{Main Results}\label{sec:DepSparseOTP}
We study sparse secret sharing schemes for $z=1$ and matrices with i.i.d entries. We start by showing in \cref{lemma:perfectprivacy} that a fundamental trade-off exists between sparsity and privacy. Namely, given a private matrix $\bfA$ with sparsity $\sa \geq {q}^{-1}$, a sparse secret sharing satisfies $\sasi > q^{-1}$ if and only if $\sleak >0$.

\begin{lemma} \label{lemma:perfectprivacy}
    Let $\rvA$ represent a sparse matrix that has i.i.d entries. For any secret sharing scheme with $z=1$, perfect information-theoretic privacy can be achieved if and only if the padding matrix $\rvR$ is generated independently of $\rvA$ with its entries being i.i.d. uniformly distributed. 
\end{lemma}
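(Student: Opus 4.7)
The plan is to prove both directions of the equivalence by analyzing the structure of a secret sharing scheme with $k=1$, $z=1$, and threshold $\thresh=2$. The $n \geq 2$ shares can be written as evaluations of a degree-one polynomial in a single randomness matrix $\rvR$, namely $\rvTask_i = \rvA + \eval[i]\rvR$ for distinct $\eval[i] \in \Fqstar$ (an evaluation at $0$ would output $\rvA$ directly and is therefore excluded). The $z=1$ privacy constraint requires each single share $\rvTask_i$ to be statistically independent of $\rvA$.

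For the $(\Leftarrow)$ direction, I would assume $\rvR$ is independent of $\rvA$ with entries i.i.d.\ uniform on $\F_q$. Then $\eval[i]\rvR$ is also uniformly distributed on $\F_q^{\nrowsA \times \ncolsA}$ since multiplication by a nonzero element of $\F_q$ is a bijection. For any fixed $A$,
\[
\Pr(\rvTask_i = Y \mid \rvA = A) = \Pr(\eval[i]\rvR = Y - A) = q^{-\nrowsA \ncolsA},
\]
which does not depend on $A$. Hence $\mutinf(\rvA;\rvTask_i)=0$ for every $i$, giving perfect privacy.

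For the $(\Rightarrow)$ direction, I would let $f(r,A) = \Pr(\rvR = r \mid \rvA = A)$ and rewrite the perfect-privacy condition $\mutinf(\rvA;\rvTask_i)=0$ by comparing the conditional and marginal distributions of $\rvTask_i$ to obtain $f(r,A) = h_i(\eval[i] r + A)$ for some function $h_i$. Applying this identity to two distinct shares with $\eval[i] \neq \eval[j]$ then yields $h_i(u) = h_j\bigl(u + (\eval[j]-\eval[i])r\bigr)$ for all $u$ and all $r \in \F_q^{\nrowsA \times \ncolsA}$; fixing $u$ and letting $r$ vary forces $h_j$, and by symmetry $h_i$, to be a constant function. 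Consequently $f(r,A) = q^{-\nrowsA \ncolsA}$ for every $r$ and $A$, which simultaneously establishes that $\rvR$ is independent of $\rvA$ and that $\rvR$ is uniformly distributed on $\F_q^{\nrowsA \times \ncolsA}$, equivalently that its entries are i.i.d.\ uniform on $\F_q$.

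The main obstacle I anticipate is justifying that the polynomial/Shamir-type description truly captures \emph{every} secret sharing scheme with $z=1$ and not only linear ones. A more general treatment would view an arbitrary scheme as producing shares whose conditional distribution given $\rvA$ is controlled by the padding $\rvR$; perfect single-share privacy for two reconstruction-complementary shares still forces the conditional distribution of $\rvR$ given $\rvA$ to be both $A$-invariant and uniform, for otherwise residual information about $\rvA$ could be extracted from at least one of the shares. This is precisely the condition that destroys sparsity, because uniform entries on $\F_q$ correspond to sparsity level $q^{-1}$, strictly smaller than the assumed input sparsity $\sa > q^{-1}$ of $\bfA$, thereby motivating the $\sleak$-leaky relaxation developed in the sequel.
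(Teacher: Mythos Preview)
Your proof is correct and takes a genuinely different route from the paper. The paper works element-wise: it bounds $\entropy(\rvA+\alpha_i\rvR) \geq \entropy(\rvR\mid\rvA)$ via the log-sum inequality, analyzes the equality case to obtain that $\Pr_{\rvR\mid\rvA}(\alpha_i^{-1}(s-a)\mid a)$ must be constant in $a$ for each $s$, and then links the resulting constraint sets across two distinct evaluation points to force all conditional probabilities equal. You instead argue directly at the matrix level from the independence condition $\Pr(\rvTask_i=Y\mid\rvA=A)=\Pr(\rvTask_i=Y)$, which immediately gives the functional equation $f(r,A)=h_i(\alpha_i r+A)$; comparing two shares then yields a translation-invariance identity that forces $h_j$ constant. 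Your argument is shorter and more elementary (no entropy machinery), and in fact uses the i.i.d.\ hypothesis only implicitly, to guarantee that every $A\in\F_q^{\nrowsA\times\ncolsA}$ has positive probability so that the conditional distribution is defined everywhere. The paper's entropy-based route, on the other hand, makes the connection to leakage $\mutinf(\rvA;\rvTask_i)=\entropy(\rvTask_i)-\entropy(\rvR\mid\rvA)$ explicit, which dovetails with the subsequent leakage optimization. Both proofs share the same scope limitation you flagged: they assume shares of the linear form $\rvA+\alpha_i\rvR$ rather than treating arbitrary secret sharing schemes, and the paper does not address this gap either.
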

\begin{proof}
     The proof is shown in~\cref{proof1}.
\end{proof}

Therefore, to provide sparsity guarantees of the encoded matrices, we have to deviate from perfect information-theoretic privacy. We start by constructing a sparse one-time pad, cf. \cref{constr:sparse_one_time pad}, that takes as input a private sparse matrix $\bfA$ and outputs two sparse shares $\bfR$ and $\bfA+\bfR$ each of which leaks a pre-specified amount of information about $\bfA$. We show the existence of a fundamental tradeoff between the achievable sparsity of the shares and the privacy guarantees. Further, given a sparse matrix $\bfA$ and fixing the same desired sparsity of the shares, \cref{constr:sparse_one_time pad} provides shares with the provably smallest leakage, cf. \cref{lemma:optimization,thm:optimal_pmf}. We focus on shares with the same sparsity levels since by \cref{lemma:optimaldelta} we know that for a desired average sparsity, constructing shares with the same sparsity levels minimizes leakage. We remark in \cref{lemma:sparse_perfect_ot_pad} the existence of sparse one-time pads where one share leaks no information about $\bfA$ and characterize the sparsity-privacy tradeoff in such a case.

In \cref{def:model}, we construct sparse secret sharing schemes with $t=2$, $z=1$ and $\shares\geq 2$. We observe that the sparsity-privacy tradeoff extends to include the number of shares, i.e., for a fixed desired sparsity level, increasing the number of shares decreases the achievable privacy guarantees. \cref{thm:optimal_pmf_straggler_tol} shows that for the same desirable sparsity of the shares, \cref{def:model} provides the provably smallest leakage possible.

We apply the designed sparse secret sharing schemes to construct sparse and private matrix multiplication schemes with parameters $(\nbworkers, \sigma = \nbworkers -3, \sleak_1^\star, \sleak_2^\star, \spars_{d_1}, \spars_{d_2})$, where $\sleak_1^\star$ and $\sleak_2^\star$ are the optimal leakages obtained for the desired sparsity levels $\spars_{d_1}$ and $\spars_{d_2}$. We show how to use this construction as a building block for different schemes with larger straggler tolerance and smaller computation tasks per \worker, cf. in \cref{sec:double_sided}. 

Furthermore, we consider the setting in which the \workers can be grouped into two non-communicating clusters. One cluster consists of fully untrusted \workers, while the \workers of the other cluster are partially trusted. In this setting, we construct straggler-tolerant sparse and private matrix multiplication schemes that can tolerate collusions. We apply the result of \cref{lemma:sparse_perfect_ot_pad} to guarantee perfect information-theoretic privacy against all the untrusted \workers and guarantee a desired leakage for any $z$ partially trusted \workers, cf. \cref{thm:main}.

We provide a discussion on the effect of correlation between the entries of the matrix $\bfA$ (and $\bfB$) on the constructed schemes and show how to break the correlation with random permutations, cf. \cref{sec:discussion}.

\section{Sparse one-time pad}
\label{subsec:sparse_ot_pad}

To build intuition, in this section, we construct a sparse one-time pad, i.e., a secret sharing scheme with parameters $n=2, t = 2, z =1$, and we require $q^{-1} < s_1, s_2 \leq s$ (cf.~\cref{def:sss}). The goal is to understand how a sparse secret sharing scheme can be constructed and to quantify the optimal leakage $\sleak$ that can be obtained for fixed sparsity levels $\spars_1,\spars_2$. 

A one-time pad~\cite{shannon_one_time_pad} takes as input the matrix $\bfA \in \F_q^{\nrowsA \times \ncolsA}$, generates a matrix $\bfR \in \F_q^{\nrowsA \times \ncolsA}$ whose entries are generated i.i.d uniformly at random from $\F_q$ and outputs $\bfR$ and $\bfA+\bfR$ as shares. The key ingredient to obtaining a \emph{sparse} one-time pad is to generate the entries of $\bfR$ \emph{dependently} on $\bfA$. %
We consider generating $\bfR$ by using a conditional PMF as in~\cref{constr:sparse_one_time pad}.
\begin{construction}[Generation of the padding matrix $\bfR$]\label{constr:sparse_one_time pad}
To construct a sparse one-time pad, the entries of the padding matrix $\bfR$ are generated dependently on the entries of $\bfA$ as\footnote{Similarly, one can define a PMF that generates a padding matrix $\bfS$ depending on $\bfB$.}
\begin{align}
 \label{eq:dependent_on_0}
    \Pr_{\Rij\lvert \Aij}(r\lvert 0) \!&\!= \begin{cases} 
    \pz, &r = 0 \\   \pzinv , &r \neq 0,
    \end{cases}\,,\\[1ex]
\label{eq:dependent_on_nz}
    \Pr_{\Rij\lvert \Aij}(r\lvert a) \!&\!= \begin{cases} 
    \pext, &r = 0 \\ \pnz, &r = -a \\ \pnzinv , &r \not\in \{0,-a\},
    \end{cases}\,,
\end{align}
where $\pzinv \triangleq \frac{1-\pz}{q-1}$ and $\pnzinv \triangleq \frac{1-\pext-\pnz}{q-2}$. 
\end{construction}
Intuitively, the PMFs presented in~\cref{eq:dependent_on_0,eq:dependent_on_nz} ensure sparsity and privacy as follows. The event occurring with probability $\pz$ ensures that a zero entry in the matrix $\bfA$ is inherited in both $\bfA + \bfR$ and $\bfR$. The events occurring with probabilities $\pext$ and $\pnz$ create a zero entry in $\bfR$ and $\bfA+\bfR$, respectively, where there was a non-zero entry in $\bfA$. The events with probabilities $\pzinv$ and $\pnzinv$ create non-zero entries in both $\bfR$ and $\bfA+\bfR$ irrespective of the value of the corresponding entry of $\bfA$. We do not distinguish cases between different values of the non-zero entries of $\bfA$ since we assume that they are uniformly distributed over $\F_q\setminus\{0\}$. We numerically validate our construction by observing that a sparse one-time pad with desired sparsity levels attains the minimum leakage possible when considering matrices $\bfR$ as in~\cref{eq:dependent_on_0,eq:dependent_on_nz}.

\subsection{Analysis of Sparsity and Leakage}

\cref{lemma:sparsity} shows the effect of $\pz$, $\pext$ and $\pnz$ on the sparsity of $\bfR$ and $\bfA+\bfR$. The increase of $\pz$ increases both sparsity levels $\sr$ and $\sar$. On the other hand, the increase of $\pext$ and $\pnz$ increases only $\sr$ and $\sar$, respectively.

\begin{proposition}\label{lemma:sparsity}
Given an input matrix $\bfA$ with sparsity level $\sa$ whose entries are i.i.d, and if a padding matrix $\bfR$ is constructed using the conditional PMFs as in~\cref{eq:dependent_on_0,eq:dependent_on_nz}, then the sparsity levels $\sr$ and $\sar$ are 
\begin{align}
    \sr &= \pz \sa + \pext (1-\sa), \label{eq:sr}\\
    \sar &= \pz \sa + \pnz(1-\sa). \label{eq:sar}
\end{align}
\end{proposition}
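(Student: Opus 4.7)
The plan is to compute $\sr$ and $\sar$ directly by applying the law of total probability, conditioning on whether the corresponding entry of $\bfA$ is zero or non-zero. Since the entries of $\bfA$ are i.i.d.\ and the conditional PMF in \cref{eq:dependent_on_0,eq:dependent_on_nz} depends only on the entry $\Aij$ (not on its position or on other entries), the resulting entries of $\bfR$ and $\bfA+\bfR$ are themselves identically distributed across positions, so it suffices to analyze a single entry.

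For $\sr$, I would write
\begin{align*}
\sr = \Pr_{\Rij}(0) = \Pr_{\Rij\lvert\Aij}(0\lvert 0)\,\Pr_{\Aij}(0) + \sum_{a\in\Fqstar}\Pr_{\Rij\lvert\Aij}(0\lvert a)\,\Pr_{\Aij}(a),
\end{align*}
then substitute $\Pr_{\Aij}(0)=\sa$, $\Pr_{\Aij}(a)=(1-\sa)/(q-1)$ for $a\neq 0$, and the values $\pz$ and $\pext$ from \cref{eq:dependent_on_0,eq:dependent_on_nz}. The sum over the $q-1$ non-zero values of $a$ each contribute $\pext\cdot (1-\sa)/(q-1)$ and collapse to $\pext(1-\sa)$, giving \cref{eq:sr}.

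For $\sar$, the key observation is that $\Aij+\Rij=0$ is equivalent to $\Rij=-\Aij$. Conditioning again on $\Aij$,
\begin{align*}
\sar = \Pr_{\Rij\lvert\Aij}(0\lvert 0)\,\Pr_{\Aij}(0) + \sum_{a\in\Fqstar}\Pr_{\Rij\lvert\Aij}(-a\lvert a)\,\Pr_{\Aij}(a).
\end{align*}
On the event $\Aij=0$ the condition $\Rij=-\Aij=0$ happens with probability $\pz$; on the event $\Aij=a\neq 0$ the condition $\Rij=-a$ happens with probability $\pnz$ (this is the case $r=-a$ in \cref{eq:dependent_on_nz}). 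Summing the $q-1$ non-zero contributions yields $\pnz(1-\sa)$, establishing \cref{eq:sar}.

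There is essentially no obstacle; the only subtlety worth flagging is to make sure we are reading the conditional PMF correctly, namely that for $\Aij=a\neq 0$ the distribution of $\Rij$ has a mass point $\pnz$ at the single value $r=-a$ (not at $r=a$ or uniformly over non-zero values), and that the spread mass $\pnzinv=(1-\pext-\pnz)/(q-2)$ lives on the remaining $q-2$ values and therefore does not contribute to either $\sr$ or $\sar$. Once this is made explicit, the two identities follow from a one-line total-probability computation.
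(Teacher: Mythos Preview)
Your proposal is correct and follows the same approach as the paper, which simply states that the result follows by computing the PMF of $\rvR$ and $\rvA+\rvR$ and applying the definition of sparsity. Your explicit total-probability computation and the care you take in reading off the mass points $\pz$, $\pext$, $\pnz$ from \cref{eq:dependent_on_0,eq:dependent_on_nz} are exactly what that one-line proof sketch expands to.
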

\begin{proof}
    The proof follows by calculating the PMF of $\rvR$ and $\rvA+\rvR$ and applying the definition of sparsity.
\end{proof}

At first glance, it may seem that increasing $\pz$, $\pext$, and $\pnz$ is beneficial. However, increasing those parameters will also increase the leakage of $\bfR$ and $\bfA+\bfR$. Observe that choosing $\pz = \pext = 1$ and $\pnz = 0$ results in the highest sparsity possible of $\bfR$ equal to one and allow $\bfA+\bfR$ to have the same sparsity level of $\bfA$. However, $\bfR$ is the all-zero matrix. Despite $\bfR$ leaking no information about $\bfA$, observing $\bfA+\bfR = \bfA$ reveals all the entries of $\bfA$. To understand this tension between privacy and sparsity, we quantify the minimum leakage that can be obtained from both $\bfR$ and $\bfA+\bfR$ for fixed $\spars_\bfR$ and $\spars_{\bfA+\bfR}$.

Define $\leakage{\bfR}\triangleq\mutinf(\rvR;\rvA)$ and $\leakage{{\bfA+\bfR}}\triangleq\mutinf(\rvA+\rvR;\rvA)$ to be the leakage of $\bfR$ and $\bfA+\bfR$, respectively. Minimizing the leakage amounts to minimizing those quantities. Since the entries of $\bfA$ are assumed to be i.i.d. and the entries of $\bfR$ are also i.i.d, we can write $\leakage{\bfR} = \nrowsA\ncolsA \leakage{1} \triangleq \nrowsA\ncolsA \mutinf(\rvR_{\{i,j\}};\rvA_{\{i,j\}})$ and $\leakage{{\bfA+\bfR}} = \nrowsA\ncolsA \leakage{2} \triangleq \nrowsA\ncolsA \mutinf(\rvA_{\{i,j\}}+\rvR_{\{i,j\}};\rvA_{\{i,j\}})$. It then suffices to minimize $\leakage{1}$ and $\leakage{2}$. Our goal is to minimize the total leakage of a sparse one-time pad defined as $\leakage{\text{tot}} \triangleq\leakage{1}+\leakage{2}$. %
Minimizing the leakage is done over all possible choices of conditional PMFs $\mathrm{P}_{\Rij\lvert\Aij}$, i.e., over the set of all possible values of $\cP \triangleq \{\pra : r,a \in \F_q\}$. Therefore, minimizing the leakage is equivalent to the following minimization problem\footnote{For ease of presentation, we omit the subscripts $\{i,j\}$ in the second to last equality.}
\begin{align*}
    \lossopt &= \optimizer \leakage{\text{tot}} =\optimizer \leakage{1} + \leakage{2} \\
    &= \optimizer \mutinf(\rvR_{\{i,j\}};\rvA_{\{i,j\}}) + \mutinf(\rvA_{\{i,j\}}+\rvR_{\{i,j\}};\rvA_{\{i,j\}}) \\
    &=\optimizer \begin{aligned}[t] &\kl{\paandr}{\pa \pr} + \kl{\paandapr}{\pa \papr} \end{aligned} \\
    &=\optimizer \!\!\! \sum_{a,b\in\mathbf{F}_q} \!\!\! \pa(a) \!\left( \pra[b-a][a] \log\frac{\pra[b-a][a]}{\papr(b)} \!+\! \pra[b][a] \log\frac{\pra[b][a]}{\pr(b)} \right)\!.
\end{align*}
This is a constrained optimization problem whose constraints follow from the requirement of having a valid PMF and the desired sparsities $\sr$ and $\sar$, and can be written as 
\begin{align*}
    \forall a\in\Fq: \pra[0][a] + \sum_{r\in\Fqstar} \pra[r][a] - 1 &= 0, \\[-5pt]
    \pra[0][0] \cdot \sa + \sum_{a\in\Fqstar} \pra[0][a] \cdot \pa(a) -\sr &= 0, \\[-5pt]
    \pra[0][0] \cdot \sa + \sum_{a\in\Fqstar} \pra[-a][a] \cdot \pa(a) -\sar &= 0.
\end{align*}
Optimizing over $\cP$ means that the optimization problem should be performed over $q^2$ variables. From numerical simulations, we observed that when the entries of $\bfA$ are distributed as in~\cref{eq:PMF_A}, then the optimization boils down to optimize only over the three variables $\pz,\pext$ and $\pnz$ shown in~\cref{constr:sparse_one_time pad}, i.e., the probabilities that do not contribute into the sparsity levels $\sr,\sar$ can be uniformly distributed. Then, the optimization problem becomes as shown next.

\begin{lemma}
\label{lemma:optimization}
If the entries of the private matrix $\bfA$ are distributed as in~\cref{eq:PMF_A} and the desired sparsity levels $\sr, \sar$, of the padding and padded matrices, are fixed, then the optimal element-wise total leakage is given by optimizing $\min\limits_{\pz, \pext, \pnz} \totalleakage$, where
\begin{align*}
&\totalleakage = s \bigg[ \pz \bigg( \log \frac{\pz}{\sar} + \log \frac{\pz}{\sr} \bigg) \nonumber\\
&+(q-1)\pzinv \bigg( \log \frac{\pzinv}{\sarinv} + \log \frac{\pzinv}{\srinv}\bigg) \bigg] + (1-s) \nonumber\\
&\cdot \bigg[ \pext \bigg( \log \frac{\pext}{\sarinv} + \log \frac{\pext}{\sr} \bigg) + \pnz \bigg( \log \frac{\pnz}{\sar} + \log \frac{\pnz}{\srinv} \bigg) \nonumber\\ &+(q-2) \pnzinv \bigg( \log \frac{\pnzinv}{\sarinv} + \log \frac{\pnzinv}{\srinv} \bigg)
\bigg],
\end{align*}
for $\srinv \triangleq (1-\sr)/(q-1)$, $\sarinv \triangleq (1-\sar)/(q-1)$ with the following constraints
\begin{align}
    c_1(\pz,\pzinv) &\triangleq \pz + (q-1) \pzinv - 1 &&= 0, \label{eq:constraint1} \\
    c_2(\pext,\pnz,\pnzinv) &\triangleq \pext + \pnz + (q-2) \pnzinv - 1 &&= 0, \label{eq:constraint2} \\
    c_3(\pz,\pext) &\triangleq \pz s + \pext (1-s) - \sr &&= 0, \label{eq:constraint3} \\
    c_4(\pz,\pnz) &\triangleq \pz s + \pnz (1-s) - \sar &&= 0. \label{eq:constraint4}
\end{align}
\end{lemma}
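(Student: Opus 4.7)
The plan is to reduce the optimization over the full $q^2$-parameter family $\cP=\{\pra[r][a]:r,a\in\F_q\}$ to the parameterization in \cref{constr:sparse_one_time pad}, and then to evaluate $\leakage{\text{tot}}=\kl{\paandr}{\pa\pr}+\kl{\paandapr}{\pa\papr}$ explicitly in the reduced variables.

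First I would invoke a symmetrization argument. Both KL summands in $\leakage{\text{tot}}$ are jointly convex in $\paandr$, hence convex in $\pra$ for the fixed $\pa$, while the four target constraints are linear in $\pra$. Moreover, the distribution $\pa$ from \cref{eq:PMF_A} and the sparsity targets $\sr,\sar$ are preserved under each of the following actions on $\pra$: (i) multiplicative scaling $(a,r)\mapsto(ca,cr)$ by $c\in\F_q^\star$; (ii) arbitrary permutations of $\{\pra[r][0]\}_{r\in\F_q^\star}$; and (iii) for each $a\in\F_q^\star$, arbitrary permutations of $\{\pra[r][a]\}_{r\in\F_q\setminus\{0,-a\}}$. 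Averaging any feasible $\pra$ over the group $G$ generated by (i)--(iii) produces a feasible, $G$-invariant PMF of no larger objective value by Jensen's inequality. Any $G$-invariant PMF is exactly of the form \cref{eq:dependent_on_0,eq:dependent_on_nz}; the row-normalizations \eqref{eq:constraint1}--\eqref{eq:constraint2} fix $\pzinv$ and $\pnzinv$, while \eqref{eq:constraint3}--\eqref{eq:constraint4} follow immediately from \cref{lemma:sparsity}.

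Second, I would substitute the symmetric PMF into $\leakage{\text{tot}}$. Since $\pr$ and $\papr$ are each symmetric on $\F_q^\star$ with total non-zero masses $1-\sr$ and $1-\sar$, each KL divergence decomposes into five non-vanishing contributions indexed by the events $\{\rvA=0,\rvR=0\}$, $\{\rvA=0,\rvR\neq 0\}$, $\{\rvA\neq 0,\rvR=0\}$, $\{\rvA\neq 0,\rvR=-\rvA\}$, and $\{\rvA\neq 0,\rvR\notin\{0,-\rvA\}\}$, with combined weights $\sa$, $\sa(q-1)$, $1-\sa$, $1-\sa$, and $(1-\sa)(q-2)$, respectively. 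Collecting these five contributions, with $\sr,\srinv$ arising in the denominators from $\pr$ and $\sar,\sarinv$ from $\papr$, reproduces exactly the displayed expression for $\totalleakage$.

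The principal obstacle is the symmetrization step, specifically the verification that actions (i)--(iii) preserve the sparsity targets. This boils down to checking invariance of $\sum_{a}\pa(a)\pra[0][a]$ and $\sum_{a}\pa(a)\pra[-a][a]$ under each action, which in turn uses the uniformity of $\pa$ on $\F_q^\star$ and the commutativity of negation with scalar multiplication in $\F_q$. Once symmetrization is granted, the remainder is direct bookkeeping: a five-term evaluation of the decomposed KL divergences and a restatement of the two row-normalizations and two sparsity identities as the four constraints.
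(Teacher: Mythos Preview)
Your approach is substantially more ambitious than the paper's. The paper does \emph{not} prove that the optimum over the full $q^2$-parameter family $\cP$ is attained within \cref{constr:sparse_one_time pad}; that reduction is flagged in the paragraph immediately preceding the lemma as an observation ``from numerical simulations''. The paper's one-line proof merely substitutes \cref{eq:dependent_on_0,eq:dependent_on_nz} into the two KL divergences and collects the five event classes, which is exactly your second step. Your first step would, if correct, supply the analytic reduction the paper does not give.

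There is, however, a genuine gap in that first step. For the Jensen averaging to bound the objective at the $G$-average by the original objective, every element of $G$ must leave $\leakage{\text{tot}}$ \emph{invariant}; feasibility preservation alone is not enough. Action~(i) does preserve $\leakage{\text{tot}}$, since $(a,r)\mapsto(ca,cr)$ is a simultaneous relabeling of $\rvA$, $\rvR$, and $\rvA+\rvR$ that fixes $\pa$. Actions~(ii) and~(iii), by contrast, permute the conditional probabilities in a \emph{single} column $a$ and therefore alter the marginal $\pr$ (respectively $\papr$); this changes $\entropy(\rvR)$ while leaving $\entropy(\rvR\mid\rvA)$ fixed, so $\mutinf(\rvR;\rvA)$ is not invariant under them. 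The obstacle you identified---preservation of the sparsity targets---is the easy check; objective invariance is the one that actually fails.

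A clean repair is to symmetrize only over action~(i). The $(i)$-invariant PMFs are parametrized by $\pz$ together with $\{\Pr_{\rvR\mid\rvA}(r\mid 1)\}_{r\in\F_q}$, and one verifies directly that for any such PMF the marginals $\pr$ and $\papr$ are already uniform on $\F_q^\star$, with values depending only on $\pz$, $\pext=\Pr_{\rvR\mid\rvA}(0\mid 1)$, and $\pnz=\Pr_{\rvR\mid\rvA}(-1\mid 1)$. With $\pz,\pext,\pnz$ held fixed, the residual dependence of $\leakage{\text{tot}}$ on $\{\Pr_{\rvR\mid\rvA}(r\mid 1)\}_{r\notin\{0,-1\}}$ is a positive multiple of $\sum_r \Pr_{\rvR\mid\rvA}(r\mid 1)\log\Pr_{\rvR\mid\rvA}(r\mid 1)$ plus a term depending only on their sum $1-\pext-\pnz$, so strict convexity of $x\log x$ forces all $q-2$ of them to equal $\pnzinv$. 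Your KL bookkeeping in the second step is correct as stated.
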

\begin{proof}
The proof follows by using the definition of KL-divergence and is omitted for brevity.
\end{proof}
For analytical tractability, we further simplify the problem by fixing a desired average sparsity level $\savg \triangleq \frac{\sr+\sar}{2}$ instead of fixing the values of $\sr$ and $\sar$ separately. We show in \cref{lemma:optimaldelta} (\cref{proof5}) that for a fixed $\savg$, the minimum total leakage is obtained for $\sr = \sar$.
Therefore, in the sequel, we will consider minimizing the total leakage for a desired average sparsity $\savg$, where the sparsity levels of the matrices are $\sr = \sar = \savg$. 

\subsection{Optimal Sparse One-Time Pad}

We show in \cref{thm:optimal_pmf} that the optimization problem in \cref{lemma:optimization} is a convex optimization problem that admits only one solution. Hence, the result is twofold: \begin{enumerate*}[label={\em (\roman*)}] \item given a desired $\savg$, the minimum leakage possible is characterized; and \item by constructing the matrix $\bfR$ as in \cref{constr:sparse_one_time pad} with parameters $\pz^\star,\pext^\star$ and $\pnz^\star$ as in \cref{thm:optimal_pmf}, the desired sparsity and the minimum leakage possible are attained.
\end{enumerate*}
\begin{theorem} \label{thm:optimal_pmf}
Given a desired sparsity level $\savg$, the optimal PMF of the form given in \cref{eq:dependent_on_0,eq:dependent_on_nz} that minimizes the leakage is obtained by picking $\pz=\pz^\star$ as the root of a polynomial $\coeffpolyTheorem_3 \pz^3 + \coeffpolyTheorem_2 \pz^2 + \coeffpolyTheorem_1 \pz + \coeffpolyTheorem_0$ that satisfies $\max\{2\savg-1+s,0\} \leq 2 \pz s \leq 2\min\{s,\savg\}$. The coefficients of the polynomial are 
\begin{align*}
\coeffpolyTheorem_3 &\triangleq s^2 (4+\qfac), \\
\coeffpolyTheorem_2 & \triangleq 4s(1-s-2\savg)-\qfac s (2\savg+s),\\
\coeffpolyTheorem_1 & \triangleq (1-s-2\savg)^2+\qfac(2s\savg + \savg^2),\\
\coeffpolyTheorem_0 &\triangleq -\qfac \savg^2,
\end{align*}
where $\qfac \triangleq (q-2)^2/(q-1)$. The parameters $\pzinv,\pext^\star,\pnz^\star$ can be obtained by using $\pz^\star$ in \cref{eq:constraint1,eq:constraint2,eq:constraint3,eq:constraint4}.
\end{theorem}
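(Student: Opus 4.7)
The plan is to reduce the multivariate constrained minimization of Lemma~\ref{lemma:optimization} to a univariate unconstrained minimization in $\pz$, then set the derivative to zero and clear denominators to recover a cubic polynomial whose coefficients match $\coeffpolyTheorem_0,\dots,\coeffpolyTheorem_3$.

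First, I would establish \emph{convexity} and uniqueness. Each mutual-information term $\leakage{1}$ and $\leakage{2}$ is a KL divergence between a joint PMF and the product of its marginals; more importantly, viewed as a function of the conditional probabilities $\{\pra\}$ with $\pa$ fixed, each term is convex (sum of $p\log(p/q)$ terms where the denominators $\papr, \pr$ are linear in the variables, so the composite $p \log p - p \log(\text{linear in } p)$ is convex). The constraints \eqref{eq:constraint1}--\eqref{eq:constraint4} are affine, so the feasible region is a polytope and the minimizer is unique (assuming the feasible set has non-empty interior, which holds in the parameter range of interest).

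Next, I would \emph{eliminate variables} using the equality constraints, working in the regime $\sr=\sar=\savg$ guaranteed by Lemma~\ref{lemma:optimaldelta}. The four constraints let me express everything in terms of $\pz$:
\begin{align*}
\pzinv &= \tfrac{1-\pz}{q-1}, &
\pext = \pnz &= \tfrac{\savg - \pz s}{1-s}, &
\pnzinv &= \tfrac{1-2\pext}{q-2}.
\end{align*}
Substituting these into $\totalleakage$ yields a function $L(\pz)$ of the single variable $\pz$, with the feasibility bounds on $\pz$ following directly from requiring $\pzinv,\pext,\pnz,\pnzinv \in [0,1]$: non-negativity of $\pext=\pnz$ gives $\pz s \le \savg$, non-negativity of $\pnzinv$ gives $2(\savg-\pz s)/(1-s)\le 1$ i.e.\ $2\pz s \ge 2\savg-1+s$, and the trivial bounds $\pz\in[0,1]$ combine to yield exactly $\max\{2\savg-1+s,0\}\le 2\pz s \le 2\min\{s,\savg\}$.

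Then I would take $\frac{dL}{d\pz}=0$. Each $p\log(p/\cdot)$ term differentiates to a linear-in-log plus an algebraic piece; the $\log$ pieces collect into a sum that vanishes because the constraint equations force the arguments of the logs on the two sides of the stationarity equation to cancel in pairs (this is the standard KKT simplification for KL-type objectives). What remains is a rational equation in $\pz$ whose numerator, after multiplying by the common denominator $s(1-s)(q-1)(q-2)$ and collecting powers of $\pz$, is a cubic. I would then verify by direct coefficient matching that the cubic is $\coeffpolyTheorem_3\pz^3 + \coeffpolyTheorem_2\pz^2 + \coeffpolyTheorem_1 \pz + \coeffpolyTheorem_0$, with $\qfac = (q-2)^2/(q-1)$ arising naturally from the combined contribution of the $\pzinv$ and $\pnzinv$ terms (each contributing a factor $1/(q-1)$ or $1/(q-2)$ after differentiation and simplification).

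The main obstacle will be the bookkeeping in the last step: tracking how the $q$-dependent factors combine so that the coefficient of $\pz^0$ becomes exactly $-\qfac \savg^2$, and how the mixed terms from $\pext$ and $\pnz$ (which coincide under $\sr=\sar$) aggregate with the $\pzinv,\pnzinv$ contributions. Since $L$ is strictly convex and the stationarity condition is cubic, exactly one of the (at most three) real roots lies in the feasibility interval stated, and that root is the advertised $\pz^\star$.
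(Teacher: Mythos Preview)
Your overall strategy—eliminate variables via the affine constraints, reduce to a univariate function $L(\pz)$, and differentiate—is sound and is in fact equivalent to the paper's Lagrange-multiplier route. The feasibility interval you derive is correct, and convexity of the objective on an affine feasible set does guarantee a unique minimizer.

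The gap is in your description of \emph{how} the stationarity condition becomes polynomial. You write that ``the log pieces collect into a sum that vanishes'' and ``what remains is a rational equation in $\pz$''. That is not what happens. Differentiating $x\log(x/c)$ gives $\log(x/c)+1$; when you apply the chain rule through $\pzinv,\pext=\pnz,\pnzinv$, the additive $+1$ constants are what cancel (they sum to $1-1-2+2=0$), and the $\log\savg$, $\log\sdinv$ contributions from the denominators also cancel. What survives is precisely
\[
\log\pz-\log\pzinv-2\log\pext+2\log\pnzinv=0,
\]
i.e.\ the multiplicative relation $\pz(\pnzinv)^2=\pzinv\,\pext\,\pnz$. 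This is still a log equation, not a rational one; only after exponentiating do you get a polynomial identity. Substituting the linear expressions $\pzinv=\tfrac{1-\pz}{q-1}$, $\pext=\pnz=\tfrac{\savg-s\pz}{1-s}$, $\pnzinv=\tfrac{1-2\pext}{q-2}$ into $\pz(\pnzinv)^2=\pzinv\pext\pnz$ and clearing denominators then yields the cubic with the stated coefficients, and the factor $\qfac=(q-2)^2/(q-1)$ emerges from the ratio $(q-1)/(q-2)^2$ that appears when you bring $\pzinv$ and $(\pnzinv)^2$ to the same side. If you literally follow your written plan and look for a residual rational equation after the logs ``vanish'', you will get $0=0$ and be stuck. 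The paper obtains the same multiplicative relation by summing the Lagrange stationarity equations to eliminate the multipliers; your direct route is slightly more elementary but needs this correction at the crucial step.
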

\begin{proof}
    The proof is shown in~\cref{proof6}.
\end{proof}

\subsection{Semi-Perfect One-Time Pad}
\label{subsec:sparse_one_time_pad}

\cref{lemma:perfectprivacy} closes the door on obtaining a perfectly secure (with zero leakage) one-time pad with sparse shares. 
More precisely, given a sparse matrix $\bfA$, if $\bfR$ is generated as in~\cref{constr:sparse_one_time pad},  then $\leakage{1} + \leakage{2} =0$ occurs if and only if $\sr = \sar = q^{-1}$. Clearly, if the field size $q$ is very large, then the shares cannot be considered sparse. However, \cref{lemma:perfectprivacy} does not rule out the case where $\leakage{1} + \leakage{2} >0$ and $\leakage{1}\cdot\leakage{2} =0$, i.e., one of the shares achieves a zero leakage. %
However, constructing a semi-perfect one-time pad scheme can be useful in distributed matrix multiplications where part of the \workers are fully untrusted while the other part can be partially trusted, cf.~\cref{sec:sp_per_it_priv}.
We show that it is possible to construct shares $\bfR$ and $\bfA+\bfR$ where the latter achieves perfect privacy, hence allowing for a sparse semi-perfect one-time pad. This can be achieved using~\cref{constr:sparse_one_time pad} with $\pz = \pnz = p$ and $\pext = \frac{1-p}{q-1}$, for any $0<p<1$ as shown in~\cref{lemma:sparse_perfect_ot_pad}.  

\begin{lemma}
\label{lemma:sparse_perfect_ot_pad}
Given a private sparse matrix $\bfA$ with sparsity level $\sa = s > q^{-1}$ and if a padding matrix $\bfR$ is generated dependently on $\bfA$ as in~\cref{constr:sparse_one_time pad} 
for $\pz = \pnz =p$ and $\pext = \frac{1-p}{q-1}$ for $0<p<1$, then the sparsity levels of the padding and the padded matrix $\sr$ and $\sar$ and the leakage $\leakage{1}$ increase with $p$, while no private information about $\bfA$ is leaked through the padded matrix, i.e., $\leakage{2} =0$.
\end{lemma}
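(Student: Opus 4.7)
My plan is to establish the three claims of the lemma in order, exploiting the observation that the specified parameter choice turns the padding into an additive translation of $-\rvA$ by independent noise.

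\textbf{Sparsities.} Substituting $\pz=\pnz=p$ and $\pext=(1-p)/(q-1)$ into \cref{eq:sr,eq:sar} immediately yields $\sar=ps+p(1-s)=p$ and $\sr=ps+\frac{(1-p)(1-s)}{q-1}=\frac{1-s}{q-1}+p\cdot\frac{qs-1}{q-1}$. Since $s>q^{-1}$, the slope $(qs-1)/(q-1)$ is strictly positive, so both $\sar$ and $\sr$ are strictly increasing affine functions of $p$.

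\textbf{Perfect privacy of the padded share.} The key step is to rewrite \cref{constr:sparse_one_time pad} under these parameters as $\rvR=-\rvA+\rvN$, where $\rvN$ is independent of $\rvA$ with $\Pr(\rvN=0)=p$ and $\Pr(\rvN=r)=(1-p)/(q-1)$ for each $r\in\Fqstar$. I would verify this case by case against the construction: for $\rvA=0$ the representation reads $\rvR=\rvN$, reproducing $\pz=p$ and $\pzinv=(1-p)/(q-1)$; for $\rvA=a\neq0$ we get $\rvR=0\iff\rvN=a$ (prob.\ $\pext$), $\rvR=-a\iff\rvN=0$ (prob.\ $\pnz$), and $\rvR=r\notin\{0,-a\}\iff\rvN=r+a\notin\{0,a\}$ (prob.\ $\pnzinv$). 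Then $\rvA+\rvR=\rvN$ is independent of $\rvA$, whence $\leakage{2}=\mutinf(\rvA;\rvA+\rvR)=\mutinf(\rvA;\rvN)=0$.

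\textbf{Monotonicity of $\leakage{1}$.} The additive-noise form makes $\entropy(\rvR\mid\rvA=a)$ identical for every $a\in\Fq$, and a direct summation confirms that the marginal PMF of $\rvR$ has the same ``one-spike'' shape but with spike mass $\sr$. Defining $h_q(x)\triangleq -x\log_q x-(1-x)\log_q\frac{1-x}{q-1}$, one obtains the clean expression $\leakage{1}=h_q(\sr)-h_q(p)$. For the operationally relevant regime $p\geq q^{-1}$ (equivalently $\sar\geq q^{-1}$), I would differentiate using $h_q'(x)=\log_q\frac{1-x}{(q-1)x}$ and $d\sr/dp=\kappa\triangleq(qs-1)/(q-1)\in(0,1)$. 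Since $\sr-p=(1-s)(1-qp)/(q-1)<0$ when $p>q^{-1}$, one has $p>\sr>q^{-1}$; because $h_q'$ is strictly decreasing with $h_q'(q^{-1})=0$, both $h_q'(\sr)$ and $h_q'(p)$ are negative with $h_q'(p)<h_q'(\sr)$. Scaling the negative $h_q'(\sr)$ by $\kappa<1$ moves it closer to zero, giving $d\leakage{1}/dp=\kappa h_q'(\sr)-h_q'(p)>0$.

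The main obstacle is this sign analysis: once $p>q^{-1}$ both $\kappa h_q'(\sr)$ and $h_q'(p)$ are negative, so positivity of the derivative has to be extracted by carefully combining the decrease of $h_q'$ with the strict inequality $\kappa<1$. This also pinpoints why $p=q^{-1}$ is the natural boundary: there $\sr=p=q^{-1}$ and $\leakage{1}=0$, in agreement with \cref{lemma:perfectprivacy}.
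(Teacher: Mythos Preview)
Your proposal is correct, and the handling of the sparsity levels matches the paper exactly. The other two parts, however, proceed quite differently from the paper's own proof, and it is worth spelling out the contrast.

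For $\leakage{2}=0$, the paper simply says to ``write down the definition of mutual information and note that the entropy is independent of the alphabet and only depends on the PMF.'' Your additive-noise representation $\rvR=-\rvA+\rvN$ with $\rvN$ independent of $\rvA$ is a cleaner way to see the same fact: it makes $\rvA+\rvR=\rvN$ independent of $\rvA$ at a glance, and it also immediately gives $\entropy(\rvR\mid\rvA)=\entropy(\rvN)=h_q(p)$, which you reuse below.

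For the growth of $\leakage{1}$, the paper takes a structural route: it writes the channel $\Pr_{\rvR\mid\rvA}$ as the convex combination $p\,\mathbbm{1}_{r=-a}+(1-p)\,\frac{1}{q-1}\mathbbm{1}_{r\neq -a}$ and invokes the convexity of mutual information in the channel (Cover--Thomas, Thm.~2.7.4) to conclude that $\leakage{1}$ is convex in $p$. Your approach instead produces the closed form $\leakage{1}=h_q(\sr)-h_q(p)$ and differentiates, using the ordering $q^{-1}<\sr<p$ together with $\kappa<1$ and the strict concavity of $h_q$ to pin down the sign. The paper's argument is shorter and relies on a standard lemma; yours is more explicit and, in fact, yields strictly more---an exact expression for the leakage that makes the location of the minimum at $p=q^{-1}$ and the monotonicity on $[q^{-1},1]$ completely transparent. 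Both arguments only deliver monotonicity on $p\geq q^{-1}$ (convexity plus the unique zero at $p=q^{-1}$ gives the same restriction in the paper's approach); your explicit acknowledgment that this is the operationally relevant regime is appropriate, since for $p<q^{-1}$ the function $\leakage{1}$ is actually decreasing.
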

\begin{proof}
The sparsity levels of the shares are computed by plugging $\pz = \pnz = p$ and $\pext = \frac{1-p}{q-1}$ into~\cref{eq:dependent_on_0,eq:dependent_on_nz} to obtain 
\begin{align*}
    \sr  = p\frac{(sq-1)}{q-1} + \frac{(1-s)}{q-1}, \text{ and }\qquad 
    \sar = p\,.
\end{align*}
Clearly, $\sar$ increases with $p$ and $\sr$ increases with $p$ for the case where $s>q^{-1}$. 

From~\cite[Theorem 2.7.4]{cover2012elements}, we conclude that the leakage $\leakage{1} = \mutinf({\Rij;\Aij})$ is a convex function in $p$ since the PMF of the input matrix is fixed and the conditional PMF of $\Rij\lvert \Aij$ can be written as a convex mixture of two conditional distributions as follows
    \begin{align*}
    \Pr_{\Rij \lvert\Aij}(r\lvert a) = p \mathbbm{1}_{r = -a} + (1-p)\frac{1}{1-q} \mathbbm{1}_{r \neq -a}\,,
    \end{align*}
$\forall r,q \in \F_q$, where $\mathbbm{1}_{\text{condition}}$ is the indicator function that returns $1$ if the condition is true and $0$ otherwise (in this case the indicator functions are  conditional distributions). 

The leakage $\leakage{2} = \mutinf({\Aij+\Rij;\Aij})$ is shown to be equal to zero by writing down the definition of mutual information and noting that the entropy is independent of the alphabet and only depends on the PMF.
\end{proof}

\cref{lemma:sparse_perfect_ot_pad} introduces a trade-off between the sparsity levels of the shares $\sa, \sar$ and the leakage $\leakage{1}$, which we will optimize in the context of sparse and private matrix multiplication schemes that can tolerate stragglers (\cref{sec:sp_per_it_priv}).

\section{Sparse Secret Sharing}
\label{sec:sss_polynomial}
In this section, we construct sparse secret sharing schemes for parameters $\shares\geq 3$, $t = 2$ and $z= t-1=1$, i.e., a generalization of the well-known Shamir secret sharing schemes~\cite{S79} for $t=2$. Given an input matrix $\bfA$, the challenge is to design a random matrix $\bfR$ such that $\shares$ evaluations of the encoding polynomial  
\begin{align*}
    \polyanoind = \bfA + x \bfR\, ,
\end{align*}
give $\shares$ shares that satisfy the desired sparsity and privacy guarantees. Our approach is to design the matrix $\bfR$ based on the entries of $\bfA$ and the evaluation points $\coeff[1],\dots,\coeff[\shares]$ used to generate the shares as shown next.

\begin{construction}[Conditional distribution of the random matrix for sparse polynomial encoding]\label{def:model} Given a set of $\shares$ distinct non-zero evaluation points $\coeff[i]$ for $i \in [\shares]$, the entries of the matrix $\bfR$ are drawn according to the following conditional distribution.
\begin{align}
    \Pr_{\Rij \lvert \Aij}(r\lvert 0) \!&\!= \begin{cases} 
    \pz, &r = 0 \\   \frac{1-\pz}{q-1} , &r \neq 0,
    \end{cases} \label{eq:gdependent_on_0_final}\\
    \Pr_{\Rij\lvert \Aij}(r\lvert a) \!&\!= \begin{cases} 
    \pc, &r \in \{-\frac{a}{\coeff[i]}\}_{i \in [\shares]} \\
    \frac{1-\shares \pc}{q-\shares} , &r\not\in \{-\frac{a}{\coeff[i]}\}_{i \in [\shares]}, \\
    \end{cases} \label{eq:gdependent_on_nz_final}
\end{align}

where $r\in \F_q$, $a \in \F_q^*$, $\shares$ is the number of shares and $\frac{1}{\coeff[i]}$ is the multiplicative inverse of $\coeff[i]$ in $\mathbb{F}_q$. 
\end{construction}

Similarly to~\cref{constr:sparse_one_time pad}, 
the intuition behind~\cref{def:model} is to pick probabilities that contribute to having zero-valued entries in the shares while reducing the dependence between the shares and $\bfA$ to reduce the leakage. The parameter $\pc$ tailors the probability of artificially having zero-valued entries in a single share, while $\pz$ controls the probability of inheriting a $0$ in every share. %
\cref{def:model} is restricted to generating $\shares$ shares that have the same sparsity levels. The motivation of this restriction is that for $\shares = 2$, the total leakage is minimized when all shares have the same sparsity, cf. \cref{lemma:optimaldelta}. Therefore, we conjecture that this holds for $\shares\geq 3$, i.e., if more than $2$ shares are generated per input, then the optimal leakage is obtained when all shares have the same sparsity levels. 

\cref{lemma:sparsity_polys} quantifies the sparsity level of shares constructed as in \cref{def:model}, \cref{lemma:leakage_straggler_tol} presents the optimization problem and \cref{thm:optimal_pmf_straggler_tol} provides the values of $\pz$ and $\pc$ that minimize the total leakage.
\begin{lemma}
\label{lemma:sparsity_polys}
Following \cref{def:model}, the shares have the following level of sparsity:
\begin{align*}
    \sd &= \pz \sa + \pc (1-\sa)\,.
\end{align*}
\end{lemma}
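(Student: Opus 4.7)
The plan is to directly compute the probability that a single entry of the $i$-th share $\polya[\coeff[i]] = \bfA + \coeff[i]\bfR$ equals zero, then invoke \cref{def:iid} to conclude the claim about sparsity.

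First, I would observe that since the entries of $\bfA$ are i.i.d.\ and, conditioned on $\Aij$, the entries $\Rij$ are drawn i.i.d.\ from one of the two conditional distributions in \cref{eq:gdependent_on_0_final,eq:gdependent_on_nz_final}, the entries of the resulting share $\bfA+\coeff[i]\bfR$ are also i.i.d. Hence $\sd = \Pr(\Aij + \coeff[i]\Rij = 0)$, and it suffices to compute this single-entry probability.

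Next, I would condition on the value of $\Aij$ and apply the law of total probability. The event $\Aij + \coeff[i]\Rij = 0$ splits into two disjoint cases: (i) $\Aij = 0$ and $\Rij = 0$; and (ii) $\Aij = a$ for some $a \in \Fq^*$ and $\Rij = -a/\coeff[i]$. By \cref{eq:PMF_A} and \cref{eq:gdependent_on_0_final}, case (i) contributes $\sa \cdot \pz$. For case (ii), for each fixed $a \in \Fq^*$ we have $\Pr(\Aij = a) = (1-\sa)/(q-1)$, and since $-a/\coeff[i]$ lies in the set $\{-a/\coeff[j]\}_{j \in [\shares]}$, \cref{eq:gdependent_on_nz_final} gives $\Pr(\Rij = -a/\coeff[i] \mid \Aij = a) = \pc$. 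Summing over the $q-1$ nonzero values of $a$ yields $(q-1)\cdot \frac{1-\sa}{q-1} \cdot \pc = (1-\sa)\pc$.

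Adding the two contributions gives $\sd = \pz \sa + \pc (1-\sa)$, as claimed. There is no real obstacle here; the only mild subtlety is verifying that the $\shares$ values $-a/\coeff[i]$ for $i \in [\shares]$ are distinct (which follows from the evaluation points being distinct and nonzero), so that the conditional probability $\pc$ in \cref{eq:gdependent_on_nz_final} applies cleanly to the specific value $-a/\coeff[i]$ associated with share $i$.
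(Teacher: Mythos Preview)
Your proposal is correct and follows exactly the approach the paper indicates: compute the PMF of a share entry by conditioning on $\Aij$ and then apply \cref{def:iid}. In fact you have supplied the details that the paper omits in its one-line proof, and your remark about the distinctness of the $-a/\coeff[i]$ is a useful sanity check that the conditional PMF in \cref{eq:gdependent_on_nz_final} is well-defined.
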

\begin{proof}
    The proof follows by calculating the PMF of the shares and applying the definition of sparsity.
\end{proof}

\newcommand{\xlogx}[2]{\ensuremath{z(#1, #2)}}
We define $\xlogx{x}{y} \triangleq x\log(\frac{x}{y})$ for ease of notation in the remainder of the paper.
\begin{lemma}
\label{lemma:leakage_straggler_tol}
Let the entries of the private matrix $\bfA$ be distributed as in~\cref{eq:PMF_A}. When considering shares of the form given in \cref{def:model} with fixed desired sparsity level $\sd$, the optimal element-wise total leakage is given by $\min\limits_{\pz, \pc} \totalleakagestragglers$, where
\begin{align}
&\totalleakagestragglers = s \big[ \xlogx{\pz}{\sd} +(q-1)\xlogx{\pzinv}{\sdinv} \big] \nonumber \\
& \!+ \! (1-s) \big[ \xlogx{\pc}{\sd} \!+\! (\shares-1) \xlogx{\pc}{\sdinv} \!+\! (q-\shares) \xlogx{\pcinv}{\sdinv} \big]\!. \label{eq:leakage_straggler_tol}
\end{align}
Thereby, $\sdinv = \frac{1-\sd}{q-1}$ is the likelihood of a share's entry being non-zero. The constraints of the optimization problem are
\begin{align}
    c_1(\pz,\pzinv) &\triangleq \pz + (q-1) \pzinv - 1 &&= 0, \label{eq:gconstraint1strag} \\
    c_2(\pc,\pcinv) &\triangleq \shares \pc + (q-\shares) \pcinv - 1 &&= 0, \label{eq:gconstraint2strag} \\
    c_3(\pz,\pc) &\triangleq \pz s + \pc (1-s) - \sd &&= 0\,.
    \label{eq:gconstraint3strag}
\end{align}
Further, $\pz,\pzinv, \pc$ and $\pcinv \triangleq \nicefrac{(1-\shares \pc)}{(q-\shares)}$ are bounded between zero and $1$.
\end{lemma}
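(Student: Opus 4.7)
My plan is to reduce the optimization to an element-wise, single-share computation and then to unfold the resulting mutual information as a KL divergence whose five summands are precisely those appearing in \eqref{eq:leakage_straggler_tol}. First, I would invoke the i.i.d.\ structure of \cref{def:model}: since the conditional distribution is applied independently to each entry, the joint distribution of $(\bfA,\bfA+\coeff[i]\bfR)$ factorizes over entries and the per-share mutual information decomposes as a sum of identical element-wise terms. Second, I would observe that \cref{def:model} is symmetric under permutations of the evaluation points $\coeff[1],\dots,\coeff[\shares]$, so that $\I_q(\Aij;\Aij+\coeff[i]\Rij)$ does not depend on $i$; it therefore suffices to analyze a single representative share, say $\rvY\triangleq\Aij+\coeff[1]\Rij$.

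The main computation is to push the conditional law of \cref{def:model} forward through the bijective affine map $r\mapsto a+\coeff[1]r$ on $\Fq$ in order to obtain $\Pr_{\rvY\vert\Aij}$. For $\Aij=0$, \eqref{eq:gdependent_on_0_final} immediately gives $\Pr(\rvY=0\mid\Aij=0)=\pz$ and $\Pr(\rvY=y\mid\Aij=0)=\pzinv$ for every $y\in\Fqstar$. For $\Aij=a\neq 0$, I would track the $\shares$ heavy preimages $\{-a/\coeff[j]\}_{j\in[\shares]}$: the one corresponding to $j=1$ maps to $\rvY=0$, and because the $\coeff[j]$'s are distinct and non-zero the remaining $\shares-1$ heavy preimages map to $\shares-1$ distinct non-zero values, each with probability $\pc$; the other $q-\shares$ preimages map bijectively to $q-\shares$ distinct non-zero values, each with probability $\pcinv$. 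The marginal $\Pr_{\rvY}$ then follows from \cref{lemma:sparsity_polys} (giving $\Pr(\rvY=0)=\sd$) together with the uniformity of the non-zero entries of $\bfA$ on $\Fqstar$ and the symmetry of the conditional PMF on non-zero $y$, which force $\Pr(\rvY=y)=\sdinv$ for every $y\in\Fqstar$.

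Finally, I would substitute these PMFs into $\I_q(\Aij;\rvY)=\sum_{a,y}\Pr(a)\Pr(y\mid a)\log\frac{\Pr(y\mid a)}{\Pr(y)}$ and group the sum by four cases ($a=0$ vs.\ $a\neq 0$, with a further partition of $y\neq 0$ in the $a\neq 0$ case according to whether $y$ comes from a heavy or a generic preimage). After applying $\Pr(\Aij=0)=s$, $\sum_{a\in\Fqstar}\Pr(\Aij=a)=1-s$, and the $\xlogx{\cdot}{\cdot}$ shorthand, the five resulting terms are precisely those of \eqref{eq:leakage_straggler_tol}. The constraints \eqref{eq:gconstraint1strag} and \eqref{eq:gconstraint2strag} are nothing but the normalization of \eqref{eq:gdependent_on_0_final} and \eqref{eq:gdependent_on_nz_final} to valid PMFs, while \eqref{eq:gconstraint3strag} is the sparsity identity of \cref{lemma:sparsity_polys}. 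I expect the only subtle step to be the bijection bookkeeping in the $\Aij\neq 0$ case: one must verify that exactly one heavy preimage is sent to $\rvY=0$ and that the remaining heavy preimages land at $\shares-1$ distinct non-zero values, which is where distinctness and non-vanishing of the $\coeff[j]$'s enter. Once this mass redistribution is nailed down, the remainder is a routine expansion of KL divergence.
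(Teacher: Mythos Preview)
Your proposal is correct and follows essentially the same approach as the paper: write the per-share mutual information as a KL divergence and substitute the conditional PMF of \cref{def:model}, with the constraints arising from PMF normalization and \cref{lemma:sparsity_polys}. Your write-up is in fact more explicit than the paper's own proof, which sets up the general optimization $\sum_{i\in[\shares]}\mutinf(\Aij+\coeff[i]\Rij;\Aij)$ and then simply states that ``when restricting our attention to shares of the form given in \cref{def:model}, the optimization problem transforms into the one given in the statement of the Lemma''; you actually carry out the push-forward bookkeeping and the five-case partition, and you make explicit the symmetry argument (all $\shares$ shares leak identically, so per-share and total minimization coincide) that the paper leaves implicit.
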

\begin{proof}
For each share's fixed desired sparsity level $\sd$, minimizing the total leakage for \emph{any sparse secret sharing scheme} amounts to solving the following optimization problem.

\begin{align}
    \lossopt &= %
    \optimizer \sum_{i \in [\shares]} \mutinf \left(\Aij+\coeff[i]\Rij; \Aij\right) \nonumber\\
    &=\optimizer \!\!\! \sum_{i \in [\shares]} \sum_{a,y\in\mathbb{F}_q} \!\!\! \pb(a) \!\left(\pra[(y-a){\coeff[i]}^{-1}][a] \log\frac{\pra[(y-a){\coeff[i]}^{-1}][a]}{\mathrm{P}_{\rvA+\coeff[i]\rvR}} \right)\!.\label{eq:matrix_optimization}
\end{align}

Again, this is a constrained optimization problem with the following constraints %
\begin{align*}
    \forall b\in\Fq: \pra[0][b] + \sum_{x\in\Fqstar} \pra[x][b] - 1 &= 0, \\
    \!\forall i \in [\shares]: \pra[0][0] \cdot \pb(0) + \sum_{b\in\Fqstar} \pra[-b\coeff^{-1}][b] \cdot \pb(b) -\sd &= 0. \! %
\end{align*}
When restricting our attention to shares of the form given in \cref{def:model}, the optimization problem transforms into the one given in the statement of the Lemma.
\end{proof}

The solution to the optimization problem in \cref{lemma:leakage_straggler_tol} is given in the following \cref{thm:optimal_pmf_straggler_tol}.

\begin{theorem} \label{thm:optimal_pmf_straggler_tol}
Given a desired sparsity level $\sd$, the value $\pc^\star$ that minimizes the leakage of the shares as in \cref{def:model} is the real root of the polynomial $\sum_{j = 0}^{\shares+1} \sfb_j \pc^j$ with coefficients 
\begin{align*}
    &\sfb_{\shares+1} = -1 - (-\shares)^{\shares}\sfd \\
    &\sfb_{\shares} = \big( \tilde{s} (-\shares)^\shares - \shares (-\shares)^{\shares-1} \big)\sfd -\bar{s} \\
    &\sfb_k = \left(\tilde{s} \binom{\shares}{k} (-\shares)^k - \binom{\shares}{k-1} (-\shares)^{k-1} \right)\sfd, \forall k\in [\shares-1] \\
    &\sfb_0 = \sfd\tilde{s},
\end{align*}
where $\tilde{s} = \nicefrac{\sd}{(1-s)}$, $\bar{s}= \nicefrac{(s-\sd)}{(1-s)}$ and $\sfd = \nicefrac{(q-1)}{(q-\shares)^\shares}$. The root $\pc^\star$ must satisfy $0 \leq \pc^\star (1-s) \leq \min\{\sd, \frac{1}{n} \}$ and $\pz^\star$ can be computed as
\begin{equation*}
    \pz^\star = \frac{\sd - \pc^\star(1-s)}{s}.
\end{equation*}
\end{theorem}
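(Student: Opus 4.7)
The plan is to use convexity together with the equality constraints (\ref{eq:gconstraint1strag})--(\ref{eq:gconstraint3strag}) to reduce the minimization of $\totalleakagestragglers$ to a one-dimensional convex problem in $\pc$, and then derive the claimed polynomial from the first-order optimality condition. First, solving the three constraints yields the affine substitutions $\pzinv = (1-\pz)/(q-1)$, $\pcinv = (1 - n\pc)/(q-n)$, and $\pz = (\sd - \pc(1-s))/s$. Plugging these into (\ref{eq:leakage_straggler_tol}) gives an objective in the single variable $\pc$ that is convex, since each summand $\xlogx{x}{y}$ has $y \in \{\sd, \sdinv\}$ fixed and $x\log(x/y)$ is convex in $x$; hence the problem has a unique minimizer in the feasible interval.

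Next, I would differentiate the reduced objective and set the derivative to zero. Using $\partial \xlogx{x}{y}/\partial x = \log(x/y) + \log e$ together with $\partial\pz/\partial\pc = -(1-s)/s$, $\partial\pzinv/\partial\pc = (1-s)/(s(q-1))$, and $\partial\pcinv/\partial\pc = -n/(q-n)$, the additive constants and every occurrence of $\sd, \sdinv$ inside the logarithms cancel cleanly, leaving the compact stationarity condition
\begin{equation*}
\pzinv\, \pc^{n} \;=\; \pz\,(\pcinv)^{n}.
\end{equation*}
Re-substituting the closed forms for $\pzinv, \pcinv, \pz$ and using the shorthand $\tilde{s} = \sd/(1-s)$, $\bar{s} = (s-\sd)/(1-s)$, $\sfd = (q-1)/(q-n)^n$, this reduces to $(\tilde{s} - \pc)(1 - n\pc)^{n}\,\sfd = (\bar{s} + \pc)\,\pc^{n}$.

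The main obstacle, and the bulk of the remaining work, is to turn this equation into the polynomial with coefficients $\sfb_k$ as stated. I would expand $(1-n\pc)^n = \sum_{k=0}^{n}\binom{n}{k}(-n)^k \pc^k$ by the binomial theorem, multiply by $(\tilde{s}-\pc)\sfd$, and subtract $(\bar{s}+\pc)\pc^n$. A careful bookkeeping then shows that for $k \in [n-1]$ the coefficient of $\pc^k$ equals $\sfd\bigl(\tilde{s}\binom{n}{k}(-n)^k - \binom{n}{k-1}(-n)^{k-1}\bigr) = \sfb_k$, while the boundary coefficients $\sfb_0 = \sfd\tilde{s}$, $\sfb_n = \sfd\bigl(\tilde{s}(-n)^n - n(-n)^{n-1}\bigr) - \bar{s}$, and $\sfb_{n+1} = -1 - \sfd(-n)^n$ arise from combining the $\tilde{s}\pc^0$, $\bar{s}\pc^n$, and top-degree $\pc^{n+1}$ contributions. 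This is a routine but error-prone calculation, so I would double-check it by reducing to the $n=2$ case and comparing to the cubic derived in \cref{thm:optimal_pmf}.

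Finally, by the strict convexity of the reduced one-dimensional objective, exactly one real root of $\sum_{j=0}^{n+1}\sfb_j \pc^j$ lies in the feasible interval $0 \leq \pc^\star(1-s) \leq \min\{\sd, 1/n\}$ (which comes from the non-negativity of $\pz$ and $\pcinv$), and this root is the unique minimizer $\pc^\star$. The value $\pz^\star = (\sd - \pc^\star(1-s))/s$ then follows immediately from constraint (\ref{eq:gconstraint3strag}).
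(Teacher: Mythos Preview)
Your proposal is correct and follows essentially the same route as the paper: both derive the compact stationarity condition $\pz(\pcinv)^{\shares}=\pzinv\,\pc^{\shares}$ from first-order optimality, substitute the affine constraints, and expand via the binomial theorem to obtain the stated polynomial. The only stylistic difference is that the paper introduces explicit Lagrange multipliers on the constraints \eqref{eq:gconstraint1strag}--\eqref{eq:gconstraint3strag} and eliminates them by summing the resulting partial-derivative equations, whereas you substitute the constraints up front and differentiate the reduced one-variable objective directly; the two computations are equivalent and yield the same key equation.
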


\begin{proof}

The proof is provided in~\cref{proof8}.
\end{proof}

\begin{figure}[t]
    \centering
    \resizebox{.95\linewidth}{!}{
    \input{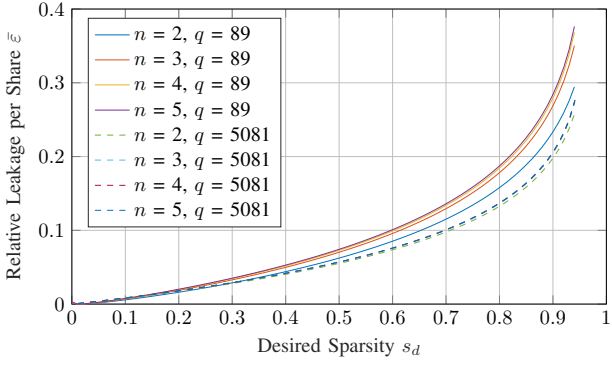}
    }
    \caption{The relative leakage of $\polyb[\eval]$ versus its desired sparsity for different number of shares $\shares$, $s=0.95$ and $q \in \{89, 5081\}$.}
\label{fig:leakage_over_sparsity}
\end{figure}

\cref{fig:leakage_over_sparsity} depicts the relative leakage $\relleakage$ per share as a function of $\sd$ and the number of shares $\shares$. We consider the finite field $\Fq$ for $q=89$ and $q=5081$. As $\shares$ grows\footnote{This statement holds for $\shares \ll q$. If $\shares$ approaches $q$, the constraints imposed by the optimization problem in \cref{eq:matrix_optimization} lead to non-sparse secret sharing.}, each share leaks more about the private input matrix for a fixed $\sd$. For large values of $q$, the loss of privacy incurred by increasing $\shares$ becomes negligible. %
In addition, for fixed $\shares$ and fixed $\sd$, increasing $q$ decreases the relative leakage per share $\relleakage$. For example, for $q=89$ and $\sd = 0.9$, $\relleakage$ is equal to $0.234$ for $\shares = 2$ and increases to $0.284$ when increasing $\shares$ to $5$. However, for $q=5081$ and $\sd=0.9$, $\relleakage$ increases from $0.199$ to $0.207$ when increasing $\shares$ from $2$ to $5$. Notice that for fixed $\shares$ and $\sd$, increasing $q$ decreases $\relleakage$, cf. \cref{fig:leakage_over_sparsity}.

We show next how to apply the designed sparse secret sharing schemes to distributed matrix multiplication.

\section{Sparse and Private Matrix Multiplication Schemes}\label{sec:double_sided}
In this section, we focus on designing straggler tolerant sparse and private matrix multiplication schemes.

We construct schemes with parameters $(\nbworkers, \sigma = \nbworkers -3, \sleak_1^\star, \sleak_2^\star, \spars_{d_1}, \spars_{d_2})$, see~\cref{def:spmm}, where $\sleak_1^\star$ and $\sleak_2^\star$ are the optimal leakages obtained for the desired sparsity levels $\spars_{d_1}$ and $\spars_{d_2}$. We then tackle the setting in which the matrix $\bfB$ is public and the workers can be grouped into two clusters: fully non-trusted workers to whom no information about $\bfA$ should be leaked; and partly trusted workers to whom a small positive leakage about $\bfA$ can be tolerated. In this setting, we construct a scheme that satisfies the perfect privacy of $\bfA$ for one cluster and minimizes the leakage for the other. The constructed scheme tolerates a desired number of full and partial stragglers.

\subsection{General Setting} 
\label{subsec:sp_pr_polynomial}

\begin{figure*}[!t]
    \centering
    \resizebox{.95\textwidth}{!}{\begin{tikzpicture}[>=stealth', auto,
		triangle/.style = {fill=white, regular polygon, regular polygon sides=3 }]
\definecolor{blue}{rgb}{0.19, 0.55, 0.91}
\definecolor{plum}{rgb}{0.8, 0.6, 0.8}
\def\blue{\color{blue}}
\def\ogreen{\color{ogreen}}
\def\orange{\color{orange}}
\def\nred{\color{plum}}
\definecolor{lightgray}{rgb}{0.83, 0.83, 0.83}
\def\mx{0.4}
\def\my{1}
\tikzstyle{matrixa} = [rectangle, rounded corners = 2mm, draw = lightgray, minimum height = 2.2cm, minimum width = 1.5 cm, fill = blue!40]
\tikzstyle{matrixr} = [rectangle, rounded corners = 2mm, draw = lightgray, minimum height = 2.2cm, minimum width = 1.5 cm, fill = blue!20]
\tikzstyle{matrixa1} = [rectangle, rounded corners = 1mm, draw = lightgray, minimum height = 0.75 cm, minimum width = 1.5 cm, fill = blue!40]
\tikzstyle{matrixr1} = [rectangle, rounded corners = 2mm, draw = lightgray, minimum height = 0.75cm, minimum width = 1.5 cm, fill = blue!20]			
\tikzstyle{matrixa2} = [rectangle, rounded corners = 1mm, draw = lightgray, minimum height = 0.75cm, minimum width = 1.5 cm, fill = blue!40]
\tikzstyle{matrixb} = [rounded corners = 2mm, draw = lightgray, minimum height = 1.5cm, minimum width = 0.3 cm, fill = ogreen!30]	
\tikzstyle{tasks1} = [rounded corners = 2mm, draw = lightgray, minimum height = 0.75cm, minimum width = 2.5 cm, fill = plum!50]		
\tikzstyle{tasks2} = [rounded corners = 2mm, draw = lightgray, minimum height = 0.75cm, minimum width = 2.5 cm, fill = blue!20]		
\tikzstyle{server} = [fill=black!10, rectangle, rounded corners=4mm, draw,minimum width=2em, minimum height=2.5em]
\node[inner sep=0] (s1) at (-1,0) {};
\node[inner sep=0pt, below left = 0.6 and 6 of s1, font=\footnotesize] (w1){};
\node[below =-0.3 of w1] (ww'2){};
\node[inner sep=0pt, left = 2 of w1, font=\footnotesize] (w'1){};
\node[below =-0.3 of w'1] (ww'1){};
\node[inner sep=0pt, right = 3cm of w1, font=\footnotesize] (w2){};
\node[below =-0.3 of w2] (ww'N){};
\node[inner sep=0pt, right = 3cm of w2, font=\footnotesize] (w3){};
\node[below =-0.3 of w3] (ww1){};
\node[inner sep=0pt, right = 2cm of w3, font=\footnotesize] (w4){};
\node[below =-0.3 of w4] (ww2){};
\node[inner sep=0pt, right = 3cm of w4, font=\footnotesize] (w5){};
\node[below =-0.3 of w5] (wwN){};
\node[tasks1, below left = 0 and 0 of ww'1] (A1+R1) {\footnotesize$\bfT_{1, 1}^{\text{u}}, \bfB$};
\node[tasks1, right = 0.2 of A1+R1] (A2+R2) {\footnotesize$\bfT_{2,1}^{\text{u}}, \bfB$};
\node[tasks1, right = 0.7 of A2+R2] (AN+RN) {\footnotesize$\bfT_{\noworkersuntrusted, 1}^{\text{u}}, \bfB$};
\node[tasks1, below = 0.1 of A1+R1] (A'1+R'1) {\footnotesize$\bfT_{1,2}^{\text{u}}, \bfB$};
\node[tasks1, below = 0.1 of A2+R2] (A'2+R'2) {\footnotesize$\bfT_{2,2}^{\text{u}}, \bfB$};
\node[tasks1, below = 0.1 of AN+RN] (A'N+R'N) {\footnotesize$\bfT_{\noworkersuntrusted,2}^{\text{u}}, \bfB$};
\node[tasks1, below = 0.4 of A'1+R'1] (Al+Rl) {\footnotesize$\bfT^{\text{u}}_{1, \stragglerun}, \bfB$};
\node[tasks1, below = 0.4 of A'2+R'2] (A'l+R'l) {\footnotesize$\bfT^{\text{u}}_{2, \stragglerun}, \bfB$};
\node[tasks1, below = 0.4 of A'N+R'N] (A''l+R''l) {\footnotesize$\bfT^{\text{u}}_{\noworkersuntrusted, \stragglerun}, \bfB$};
\node[above = 2 of Al+Rl] (W'1) {\footnotesize {\Worker} $w^{\text{u}}_1$};
\node[above = 2 of A'l+R'l] (W'2) {\footnotesize {\Worker} $w^{\text{u}}_2$};
\node[above = 2 of A''l+R''l] (W'N) {\footnotesize {\Worker} $w^{\text{u}}_{\noworkersuntrusted}$};
\node[tasks2, right = 1 of AN+RN] (R1) {\footnotesize$\bfT^{\text{t}}_{1, 1}, \bfB$};
\node[tasks2, right = 0.2 of R1] (R2) {\footnotesize$\bfT^{\text{t}}_{2, 1}, \bfB$};
\node[tasks2, right = 0.7 of R2] (RN) {\footnotesize$\bfT^{\text{t}}_{\noworkerstrusted, 1}, \bfB$};
\node[tasks2, below = 0.1 of R1] (R'1) {\footnotesize$\bfT^{\text{t}}_{1, 2}, \bfB$};
\node[tasks2, below = 0.1 of R2] (R'2) {\footnotesize$\bfT^{\text{t}}_{2, 2}, \bfB$};
\node[tasks2, below = 0.1 of RN] (R'N) {\footnotesize$\bfT^{\text{t}}_{\noworkerstrusted, 2}, \bfB$};
\node[tasks2, below = 0.4 of R'1] (R''1) {\footnotesize$\bfT^{\text{t}}_{1, \stragglertr}, \bfB$};	
\node[tasks2, below = 0.4 of R'2] (R''2) {\footnotesize$\bfT^{\text{t}}_{2, \stragglertr}, \bfB$};
\node[tasks2, below = 0.4 of R'N] (R''N) {\footnotesize$\bfT^{\text{t}}_{\noworkerstrusted, \stragglertr}, \bfB$};
\node[above = 2 of R''1] (W1) {\footnotesize {\Worker} $w^{\text{t}}_1$};
\node[above = 2 of R''2] (W2) {\footnotesize {\Worker} $w^{\text{t}}_2$};
\node[above = 2 of R''N] (WN) {\footnotesize {\Worker} $w^{\text{t}}_{\noworkerstrusted}$};
\node[thick,above right = 0.8 and 0.75 of w1](pikat1){};
\node[thick,above right = 0.8 and 2 of w4](pikat'1){};
\node[thick,below = 1.1 of pikat1](pikat2) {$\dots$};
\node[thick,below = 0.5 of pikat2](pikat3) {$\dots$};
\node[thick,below = 0.7 of pikat3](pikat4) {$\dots$};
\node[thick,below = 1.1 of pikat'1](pikat'2) {$\dots$};
\node[thick,below = 0.5 of pikat'2](pikat'3) {$\dots$};
\node[thick,below = 0.9 of pikat'3](pikat'4) {$\dots$};
\draw[ultra thick] (-2.8,0.5) -- (-2.8, -3.7);
\node[above left = 0.8 and 0 of w1](cluster1) {\footnotesize \emph{Untrusted cluster}};
\node[above right = 0.8 and -0.2 of w4](cluster2) {\footnotesize \emph{Partly trusted cluster}};
\node[left = 0 of A1+R1]{\footnotesize Layer $1$};
\node[left = 0 of A'1+R'1]{\footnotesize Layer $2$};
\node[left = 0 of Al+Rl]{\footnotesize Layer $\stragglerun$};
\node[right = 0 of R''N]{\footnotesize Layer $\stragglertr$};
\node[right = 0 of R'N]{\footnotesize Layer $2$};
\node[right = 0 of RN]{\footnotesize Layer $1$};
     
\end{tikzpicture}}
    \caption{The two non-communicating clusters, namely the untrusted and partly trusted, are illustrated on the left and right-hand side, respectively. The {\workers} of the untrusted cluster $w_1^{\text{u}}, w_2^{\text{u}}, \dots, w_{\noworkersuntrusted}^{\text{u}}$ get each $\stragglerun$ tasks (created as in~\cref{subsec:task_distrib}) that leak nothing about the input matrix. On the other hand, each {\worker} of the partly trusted cluster $w_1^{\text{t}}, w_2^{\text{t}}, \dots, w_{\noworkerstrusted}^{\text{t}}$ gets $\stragglertr$ tasks that leak some information about the input matrix. Every {\worker} has to multiply the designated matrices with the public matrix $\bfB$ and then send each computation back to the {\master}.}
    \label{fig:fractional_repetition_sparse}
\end{figure*}

The scheme follows from an immediate application of the scheme constructed in \cref{sec:sss_polynomial}. The input matrices are encoded into shares using our sparse polynomial encoding introduced in \cref{def:model} with the parameters derived in \cref{thm:optimal_pmf_straggler_tol}. Namely, 
the {\master} picks a set of distinct non-zero evaluation points $\{\eval[j]\}_{j=1}^{\nbworkers}$ and creates a polynomial pair $\polyanoind,\polybnoind$ as
\begin{align}
    \label{eq:polys}
    \polyanoind &= \bfA + x \bfR & \text{ and } & &
    \polybnoind &= \bfB + x \bfS,%
\end{align}
where $\bfR$ and $\bfS$ are generated depending on $\bfA$ and $\bfB$, respectively. Each \worker $j$ is then assigned $\polyanoind[\eval[j]]$ and $\polybnoind[\eval[j]]$ and its task is to return the multiplication $\polyanoind[\alpha_j] \cdot \polybnoind[\alpha_j]$ to the {\master}.
Define the degree-$2$ polynomial $\respolynoind[x]$ as
\begin{equation}\label{eq:h_pol}
    \respolynoind = \polyanoind \cdot \polybnoind = \bfA\bfB + x (\bfR \bfB + \bfA\bfS) + x^2 \bfR\bfS\,.
\end{equation}
Then, the {\master} obtains the desired computation $\bfC = \bfA \bfB = \respolynoind[0]$ from the responses of any three \workers. This task assignment strategy allows for \emph{full} straggler tolerance, i.e., $\cC_j = \emptyset$ if {\worker} $j$ is a straggler or $\cC_j = \{h(\eval)\}$, otherwise.

Each \worker gets assigned a computation as large as the multiplication $\bfA \bfB$. To reduce the computation complexity, this strategy can be coupled with cyclic shift task assignment, e.g.,~\cite{coded_sparse_matrix_leverages_partial_stragglers},\cite{el2010fractional}, at the expense of reducing the straggler tolerance. For example, in a system of $\nbworkers$ \workers, the matrices $\bfA$ and $\bfB$ can be divided into $\splitpsmm$ parts, $\splitpsmm\vert \nbworkers$, each. The division of the matrices is column-wise for $\bfA$ and row-wise for $\bfB$ so that the multiplication $\bfC = \bfA\bfB$ is the sum of $\bfA_i\bfB_i$ for $i= 1,\dots,\splitpsmm$. The \workers are grouped into $\splitpsmm$ groups $0,\dots,\splitpsmm-1$ of $\nbworkers/\splitpsmm$ \workers. Each part $\bfA_i, \bfB_i$ is encoded into $\nbworkers/\splitpsmm$ tasks using our scheme and assigned to the workers of the group $i-1$. Each \worker now computes a task that is $\splitpsmm$ times smaller. It is clear that one straggler can be tolerated in the worst case (and $\splitpsmm$ stragglers, one from each group, in the best case). To increase straggler tolerance, the encoding of $\bfA_i, \bfB_i$ are additionally assigned to the workers of the group $i+1\mod \splitpsmm$. The straggler tolerance is increased to $\splitpsmm+1$. The leakage is unaffected as the padding matrices used for each encoding are independent. In fact, the leakage is smaller as each \worker observes a smaller matrix. Additionally, assigning the encoding of $\bfA_i, \bfB_i$ to the workers of the groups $i+2\mod \splitpsmm,\dots, i+x \mod \splitpsmm$ for $2\leq x \leq \splitpsmm-1$, increases the straggler tolerance to $x$ and increases the computation load by a multiplicative factor $x$.

Alternatively, to reduce the computation load for large $\nbworkers$ and tolerate $\sigma$ stragglers, the {\master} divides the input matrices (as above) into $\splitpsmm$ matrices such that $\nbworkers \vert \splitpsmm (\sigma+3)$ and $\sigma<\splitpsmm$. The \master creates $\splitpsmm$ polynomial pairs $(\polya,\polyb)$ to encode the $\bfA_i,\bfB_i$'s and assigns $\sigma+3$ distinct evaluations of each polynomial pair $i \in [\splitpsmm]$ to workers $\sigma+3$ workers indexed by $\mathcal{Z}_i, i \in [\splitpsmm]$, where $\vert \mathcal{Z}_i \vert = \sigma+3$. That is, worker $j$ gets $\splitpsmm (\sigma+3) / \nbworkers$ pairs of sparse sub-matrices $\{(\polya[\eval], \polyb[\eval])\}_{i\in\splitpsmm: j\in \mathcal{Z}_i}$. The sets $\mathcal{Z}_i, i \in [\splitpsmm]$ can, for example, be created by a cyclic repetition scheme similar to \cite{tandon2017gradient}, which ensures that $3$ evaluations of each polynomial $h_i(x)$ are guaranteed to be returned despite having any $\sigma$ full stragglers. Consequently, each \worker computes a $(\sigma+3)/\nbworkers$-fraction of the entire multiplication.

\subsection{Setting of clustered \workers}\label{sec:sp_per_it_priv}

In this setting, we assume that the \workers can be grouped into two non-communicating clusters with the properties explained next. We introduce a one-sided private and sparse matrix multiplication scheme, where the matrix $\bfB$ is publicly available. The main advantage of such schemes is tolerating collusions, i.e., allowing $z>1$.

\subsubsection{The clustering model}\label{subsec:clustering}
The $\nbworkers$ workers, where $\nbworkers = \noworkersuntrusted + \noworkerstrusted$, are divided into two clusters of \workers with the following properties:
\begin{itemize}
    \item \textbf{Untrusted cluster:} The cluster consists of $\noworkersuntrusted$ workers, $w_i^u$, for $i = 1, \dots, \noworkersuntrusted$, which are fully untrusted and can all collude. No information leakage about $\bfA$ should be tolerated, i.e., perfect privacy is required. 
    \item \textbf{Partly trusted cluster:} This cluster is composed of $\noworkerstrusted$ {\workers} $w_i^\text{t}$, for $i=1,\dots, \noworkerstrusted$, that are partly trusted by the {\master}. A small amount of information about $\bfA$ can be leaked to any $z$, $1\leq z < \noworkerstrusted$, colluding {\workers}.
\end{itemize}
The grouping of {\workers} into two \emph{non-communicating} clusters is practically relevant since the untrusted cluster can represent an externally hired computational node pool offered by a single provider. In contrast, the partly trusted cluster might represent a locally available computing pool, that can be partly trusted. We combine the sparse one-time pad of~\cref{lemma:sparse_perfect_ot_pad} with the cyclic shift task assignment~\cite{coded_sparse_matrix_leverages_partial_stragglers},\cite{el2010fractional} to obtain the desired privacy, sparsity, and straggler tolerance.

\subsubsection{Task distribution and straggler mitigation}
\label{subsec:task_distrib}

Given the matrix $\bfA \in \F_q^{\nrowsA \times \ncolsA}$, the {\master} generates a padding matrix $\bfR \in \F_q^{\nrowsA \times \ncolsA}$ following~\cref{constr:sparse_one_time pad} and setting $\pz = \pnz \triangleq p$ and $\pext = \frac{1-p}{q-1}$. The {\master} splits the padded matrix $\bfA + \bfR$ row-wise into $\noworkersuntrusted$ sub-matrices and splits the padding matrix $\bfR$ into $\noworkerstrusted$ sub-matrices as follows 
\begin{align*}
\bfA+\bfR &= \begin{pmatrix}
(\bfA+\bfR)_1\\ 
(\bfA+\bfR)_2\\ 
\vdots\\
(\bfA+\bfR)_{\noworkersuntrusted}
\end{pmatrix} 
&& \text{ and } & \bfR = \begin{pmatrix}
\bfR_1\\ 
\bfR_2\\
\vdots\\
\bfR_{\noworkerstrusted}
\end{pmatrix}\,,
\end{align*}
where $(\bfA+\bfR)_i \in \F_q^{\frac{\nrowsA}{\noworkersuntrusted}\times \ncolsA}$ for $i \in [\noworkersuntrusted]$ and $\bfR_i \in \F_q^{\frac{\nrowsB}{\noworkerstrusted}\times \ncolsB}$ for $i \in [\noworkerstrusted]$. Then, the {\master} sends to every {\worker} $w_i^{\text{u}}$ of the untrusted cluster a sub-matrix $\bfT_{i, 1}^{\text{u}} \triangleq (\bfA + \bfR)_i$ for $i\in [\noworkersuntrusted]$ and to every {\worker} $w_a^{\text{t}}$ of the partly trusted cluster the sub-matrices $\bfT_{a, 1}^{\text{t}} \triangleq \bfR_a$ for $a \in [\noworkerstrusted]$. The collection of the matrices $\bfT_{i, 1}^{\text{u}}$ and $\bfT_{a, 1}^{\text{t}}$ is called the first \emph{layer} of computation tasks. The \workers compute $\bfT_{i, 1} \bfB$ and return the result to the \master. 

Assigning one layer only is not resilient to stragglers. To tolerate $\stragglerun-1$ stragglers in the untrusted cluster and $\stragglertr-1$ in the partly cluster, the \master assigns more computation layers to the workers following a cyclic shifting strategy:
\begin{align*}
\bfT^{\text{u}}_{(i \mod \noworkersuntrusted)+1,j} &= \bfT^{\text{u}}_{i,j-1},\quad i \in [\noworkersuntrusted], j \in [\stragglerun] \setminus \{1\},\\
\bfT^{\text{t}}_{(a \mod \noworkerstrusted)+1, b} &= \bfT^{\text{t}}_{a, b-1},\quad a \in [\noworkerstrusted], b \in [\stragglertr] \setminus \{1\}.
\end{align*}

The system is illustrated in~\cref{fig:fractional_repetition_sparse}. Every {\worker} of the untrusted cluster $w_i^{\text{u}}$, $i\in[\noworkersuntrusted]$, sequentially multiplies and sends $\bfT^{\text{u}}_{i,j}\bfB$ to the \master for $j=1,\dots,\stragglerun$. The same applies to the \workers of the partly trusted cluster. %

The privacy, sparsity, and straggler tolerance guarantees of this task assignment strategy are stated in~\cref{thm:main}. 

\begin{theorem}\label{thm:main}
The cyclic shift assignment strategy explained above provides the following guarantees:

\begin{itemize}
    \item \emph{Privacy guarantees}: In the \emph{untrusted cluster}, perfect privacy of $\bfA$ is guaranteed even if all the {\workers} collude, i.e., $\mutinf(\{\rvT_{i,j}^{\text{u}}\}_{i\in[\noworkersuntrusted],j\in[\stragglerun]};\rvA) = 0$. In the \emph{partly trusted} cluster, if up to $z$ {\workers} collude, then with $\leakage{1}=\mutinf(\rvR_{\{i,j\}};\rvA_{\{i,j\}})$, the leakage about $\bfA$ is given by 
\begin{equation}\label{eq:colluding_trusted}
    \mutinf(\{\rvT^{\text{t}}_{i,j}\}_{i\in\cZ, j\in [\stragglertr]};\rvA) = \min\left\{\dfrac{\stragglertr z}{\noworkerstrusted}, 1\right\}\cdot m \cdot n \cdot \leakage{1}.
\end{equation}
\item \emph{Sparsity:} For a desired maximum leakage $\varepsilon$, i.e., $\mutinf(\{\rvT^{\text{t}}_{i,j}\}_{i\in\cZ, j\in [\stragglertr]};\rvA)\leq \varepsilon$, the maximum sparsity levels allowed for the matrices $\bfA+\bfR$ and $\bfR$ are given by
\begin{align*}
    \sar &= p^\star, \quad \text{ and } \quad\sr = p^\star \frac{sq-1}{q-1}+\frac{1-s}{q-1},
\end{align*}
where 
\begin{equation}
    \label{eq:pstar}
    p^\star = \max_{\mutinf(\{\rvT^{\text{t}}_{i,j}\}_{i\in\cZ, j\in [\stragglertr]};\rvA)\leq \varepsilon} p.
\end{equation}

\item \emph{Straggler tolerance:} The {\master} is able to reconstruct the desired computation after receiving any $K^{\text{u}} \triangleq \frac{-\stragglerun^2 +\stragglerun(2\noworkersuntrusted-1)}{2} +1$ and any $K^{\text{t}} \triangleq \frac{-\stragglertr^2 +\stragglertr(2\noworkerstrusted-1)}{2} +1$ responses from the {\workers} of the untrusted and partly trusted clusters, respectively. %
In other words, partial stragglers and up to $\stragglerun - 1$ and $\stragglertr - 1$ stragglers can be tolerated from the respective clusters.
\end{itemize}

\end{theorem}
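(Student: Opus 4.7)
The plan is to treat the three guarantees separately, leveraging \cref{lemma:sparse_perfect_ot_pad} for the privacy in the untrusted cluster and combinatorial worst-case arguments for the partly trusted cluster and the straggler tolerance.

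\textbf{Perfect privacy for the untrusted cluster.} Each task $\bfT_{i,j}^{\text{u}}$ is a row-block of $\bfA + \bfR$, and the returned product $\bfT_{i,j}^{\text{u}} \bfB$ is a deterministic function of the task since $\bfB$ is public. The joint view of all $\noworkersuntrusted$ colluding workers is therefore contained in the realization of $\rvA + \rvR$. Instantiating \cref{constr:sparse_one_time pad} with $\pz = \pnz = p$ and $\pext = (1-p)/(q-1)$ permits invoking \cref{lemma:sparse_perfect_ot_pad}, which yields $\leakage{2} = \mutinf(\rvA + \rvR;\rvA) = 0$; perfect privacy against the untrusted cluster follows directly.

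\textbf{Bounded leakage for the partly trusted cluster.} Under the cyclic-shift assignment, the tasks held by worker $w_a^{\text{t}}$ form a window of $\stragglertr$ consecutive sub-matrices $\bfR_a, \bfR_{a-1}, \dots, \bfR_{a-\stragglertr+1}$ (indices modulo $\noworkerstrusted$). Any $z$ colluders therefore observe at most $\min\{z\stragglertr, \noworkerstrusted\}$ distinct blocks, with equality when the colluders are spaced so that their cyclic windows are disjoint; this is the worst case. Because \cref{constr:sparse_one_time pad} produces $\bfR$ entry-wise with each entry $\rvR_{\{i,j\}}$ depending only on $\rvA_{\{i,j\}}$, the pairs $(\rvA_{\{i,j\}},\rvR_{\{i,j\}})$ are i.i.d.\ across positions. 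Consequently, for any set $\mathcal{O}$ of observed entries of $\bfR$, the mutual information factorizes as $|\mathcal{O}|\,\leakage{1}$, and $|\mathcal{O}| = \min\{z\stragglertr, \noworkerstrusted\}\cdot(\nrowsA\ncolsA/\noworkerstrusted) = \min\{z\stragglertr/\noworkerstrusted,1\}\cdot \nrowsA\ncolsA$, which is exactly \cref{eq:colluding_trusted}.

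\textbf{Sparsity and straggler tolerance.} The sparsity expressions $\sar = p$ and $\sr = p(sq-1)/(q-1) + (1-s)/(q-1)$ are direct from \cref{lemma:sparse_perfect_ot_pad}, which also shows $\leakage{1}$ is convex and increasing in $p$ for $s > q^{-1}$; hence the largest admissible $p$ satisfying the leakage constraint is precisely $p^\star$ in \cref{eq:pstar}, and substitution gives the claimed maximal sparsities. For straggler tolerance, a combinatorial worst case suffices: a fixed sub-matrix $(\bfA+\bfR)_k$ lies in the windows of exactly $\stragglerun$ consecutive workers, occupying layers $1, 2, \dots, \stragglerun$ respectively. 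For it to be missing, these workers must have completed at most $0, 1, \dots, \stragglerun-1$ layers, while the other $\noworkersuntrusted - \stragglerun$ workers may have completed all $\stragglerun$ layers, giving at most $\stragglerun(\stragglerun-1)/2 + (\noworkersuntrusted - \stragglerun)\stragglerun = \frac{-\stragglerun^2 + \stragglerun(2\noworkersuntrusted - 1)}{2}$ responses without reconstruction, so one additional response always completes the recovery, matching $K^{\text{u}}$; the identical argument on the trusted cluster yields $K^{\text{t}}$. The step I expect to require the most care is the entry-wise decomposition of the mutual information for the partly trusted cluster: it relies on the conditional independence structure built into \cref{constr:sparse_one_time pad}, and on the fact that revealing $\rvR_{\{i,j\}}$ tells nothing about $\rvA_{\{i',j'\}}$ for $(i',j')\neq (i,j)$. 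Once this is in place, bounding $|\mathcal{O}|$ across all choices of $z$ colluders closes the argument.
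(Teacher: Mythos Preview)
Your proposal is correct and essentially spells out the ``proof by construction'' that the paper omits for brevity: the paper gives no detailed argument, so there is nothing to compare against beyond noting that your use of \cref{lemma:sparse_perfect_ot_pad} for the untrusted cluster, the entry-wise i.i.d.\ decomposition for the partly trusted cluster, and the combinatorial worst-case count for the cyclic-shift layers are exactly the ingredients implicit in the construction. Your straggler-tolerance count $\stragglerun(\stragglerun-1)/2 + (\noworkersuntrusted-\stragglerun)\stragglerun = K^{\text{u}}-1$ is the right worst-case bound, and your remark that the factorization of $\mutinf$ hinges on the entry-wise conditional independence built into \cref{constr:sparse_one_time pad} correctly identifies the only non-routine step.
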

\begin{proof}
    The proof by construction is omitted for brevity.
\end{proof}

\begin{figure}[t]
    \centering
    \resizebox{.48\textwidth}{!}{\input{./figures/z_vs_pstar}}
    \caption{The value of $p^\star$ as a function of $z$, the number of colluding workers of the partly trusted cluster, for different privacy constraints reflected by choice of $\relleakage$. We take $\noworkerstrusted = 100$, $\stragglertr = 1$, $\sa =0.93$ and consider the extension finite field $\F_{256}$. The value of $p^\star$ controls the sparsity of the shares, i.e., $\sar = p^\star$ and $\sr \approx 0.93p^*$, as in~\cref{eq:pstar}. The case $\relleakage = 0$ corresponds to perfect privacy. It requires $p^\star = q^{-1}$ for all $ z \in [\noworkerstrusted]$, whereas $\relleakage = 1$ corresponds for the non-private case, hence allowing the choice $p^\star = 1$ for all $ z \in [\noworkerstrusted]$. For $0 < \relleakage < 1$, the increase of $z$ and/or the {decrease of} $\relleakage$ reduces the achievable $\sar$ and $\sr$. 
    }
    \label{fig:my_label}
\end{figure}

We plot in~\cref{fig:my_label} the value of $p^*$, representing the sparsity of the shares, versus $z$, the number of colluding {\workers} of the partly trusted cluster for a varying choice of maximum relative leakage $\relleakage$. As reflected from the theory, decreasing $\relleakage$ decreases $p^*$, meaning that the sparsity levels $\sr, \sar$ should have lower values to guarantee higher privacy constraints. In addition, for a fixed relative leakage $\relleakage$, as $z$ increases, $p^\star$ (and the allowed sparsity $\sr, \sar$) decreases since more {\workers} collude and can have access to a bigger portion of resemblance to the private matrix $\bfA$, which should be made more private.

\section{Matrices with Correlated Entries}\label{sec:discussion}
In this section, we discuss the use of shuffling to break the potential correlation between the entries of sparse matrices and mention further research directions. 

In certain applications, e.g., the matrix entries might be correlated when the sparse matrices represent a face~\cite{wright2008robust}. However, our theory and the proposed schemes assume the entries of $\bfA$ (and $\bfB$) to be independent (see~\cref{def:iid}).
Sparsity-preserving pre-processing techniques can be applied to break the correlation between the entries of sparse matrices. In this section, we discuss the effect of applying random permutations to entries of $\bfA$ and $\bfB$. %
We will focus on cases where $\bfA$ and $\bfB$ must remain private. The ideas generalize to cases in which $\bfB$ is public.

Shuffling a matrix is equivalent to multiplying it with permutation matrices from the left and the right.
Those permutation matrices are private to the \master and are unknown to the workers. Considering input matrices $\bfA \in \Fq^{{\nrowsA} \times \ncolsA}$ and $\bfB \in \Fq^{\nrowsB \times {\ncolsB}}$, the {\master} picks three permutation matrices $\bfP_{1} \in \F_2^{{\nrowsA} \times {\nrowsA}}$, $\bfP_{2}\in \F_2^{\ncolsA \times \ncolsA}$ and $\bfP_{3}\in \F_2^{{\ncolsB} \times {\ncolsB}}$. Each column of $\bfP_{i}$ is generated by sampling a random unit vector without replacement from the respective field. The {\master} then conducts the following pre-processing operations
\begin{align*}
    \bfA^\prime &= \bfP_{1} \cdot \bfA \cdot \bfP_{2} \, \text{ and } \, \bfB^\prime = \bfP_{2}^{\mathsf{T}} \cdot \bfB \cdot \bfP_{3}.
\end{align*}
Note that permutation matrices are orthogonal, i.e., the inverse equals their transpose. The matrices $\bfA^\prime,\bfB^\prime$ can then be encoded as in~\cref{subsec:sp_pr_polynomial}. 

\begin{figure}[t]
    \begin{minipage}{\linewidth}
        \centering
        \begin{minipage}[t]{0.45\textwidth}
            \subfloat[Matrix $\bfA$, $\sa \approx0.94$]{\includegraphics[width=\linewidth]    {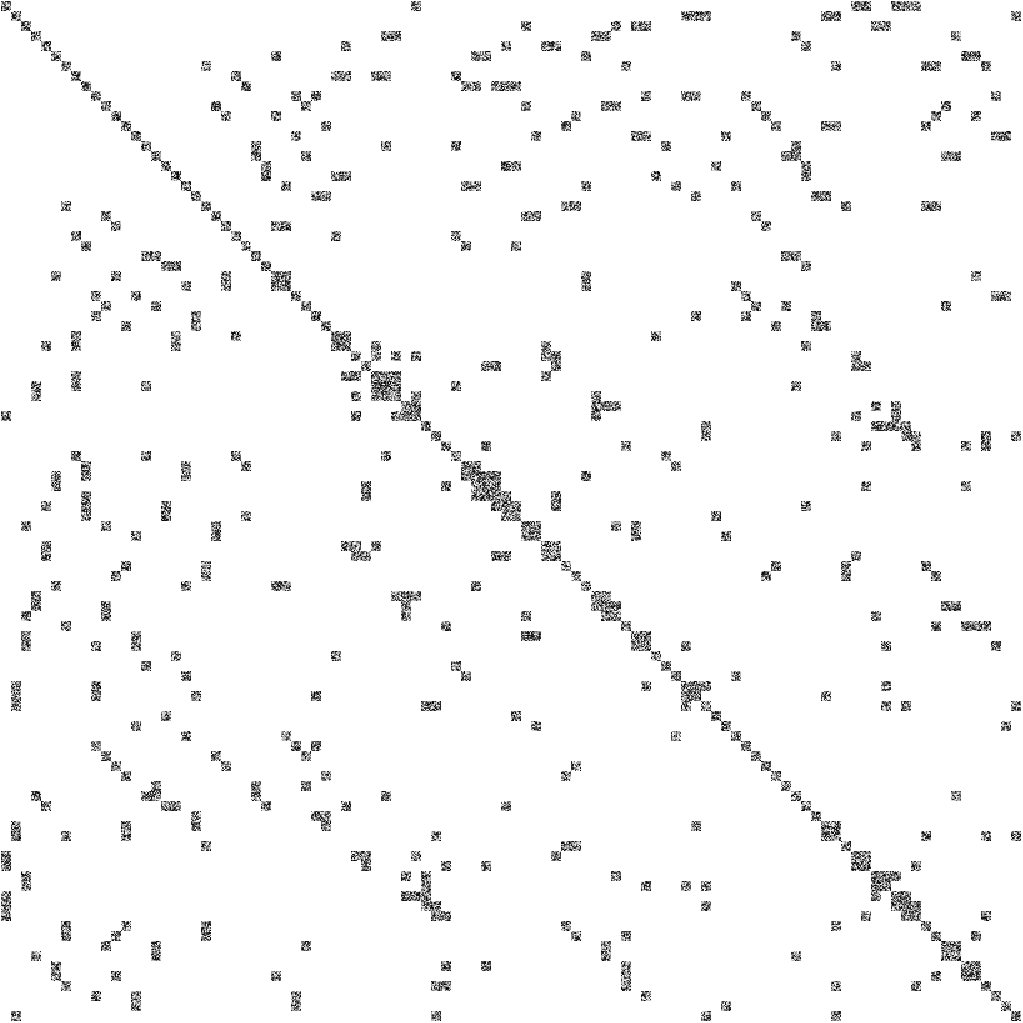}\label{subfig:orig}}
        \end{minipage}%
     \hfill
        \begin{minipage}[t]{0.45\textwidth}
            \subfloat[Share $f(\eval)$ of $\bfA$ with sparsity $s_d \approx 0.85$]{\includegraphics[width=\linewidth]{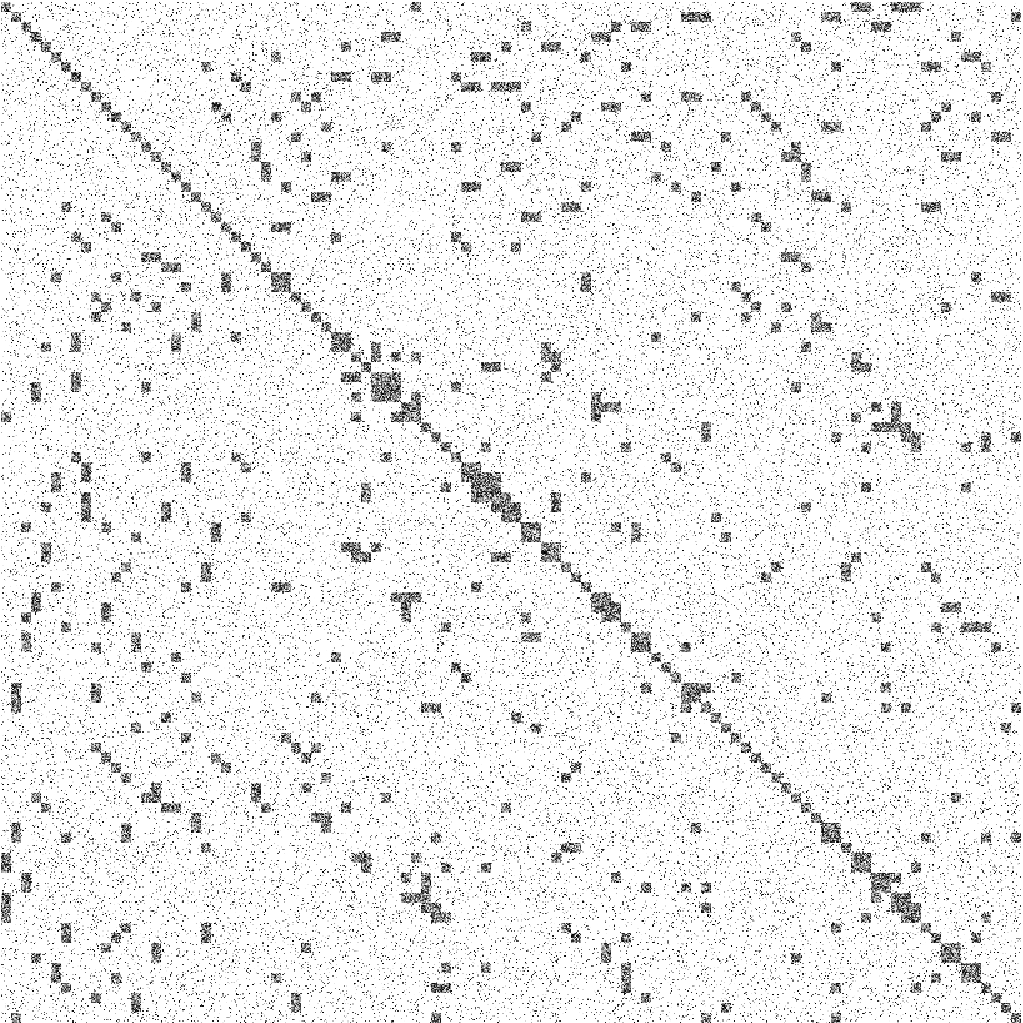}\label{subfig:share}}
        \end{minipage}
    \end{minipage}

    \begin{minipage}{\linewidth}
    \centering
        \begin{minipage}{0.45\textwidth}
            \subfloat[Matrix $\bfA^\prime$ after pre-processing, $s_{\mathbf{A}^\prime}=\sa \approx0.94$]{\includegraphics[width=\linewidth]{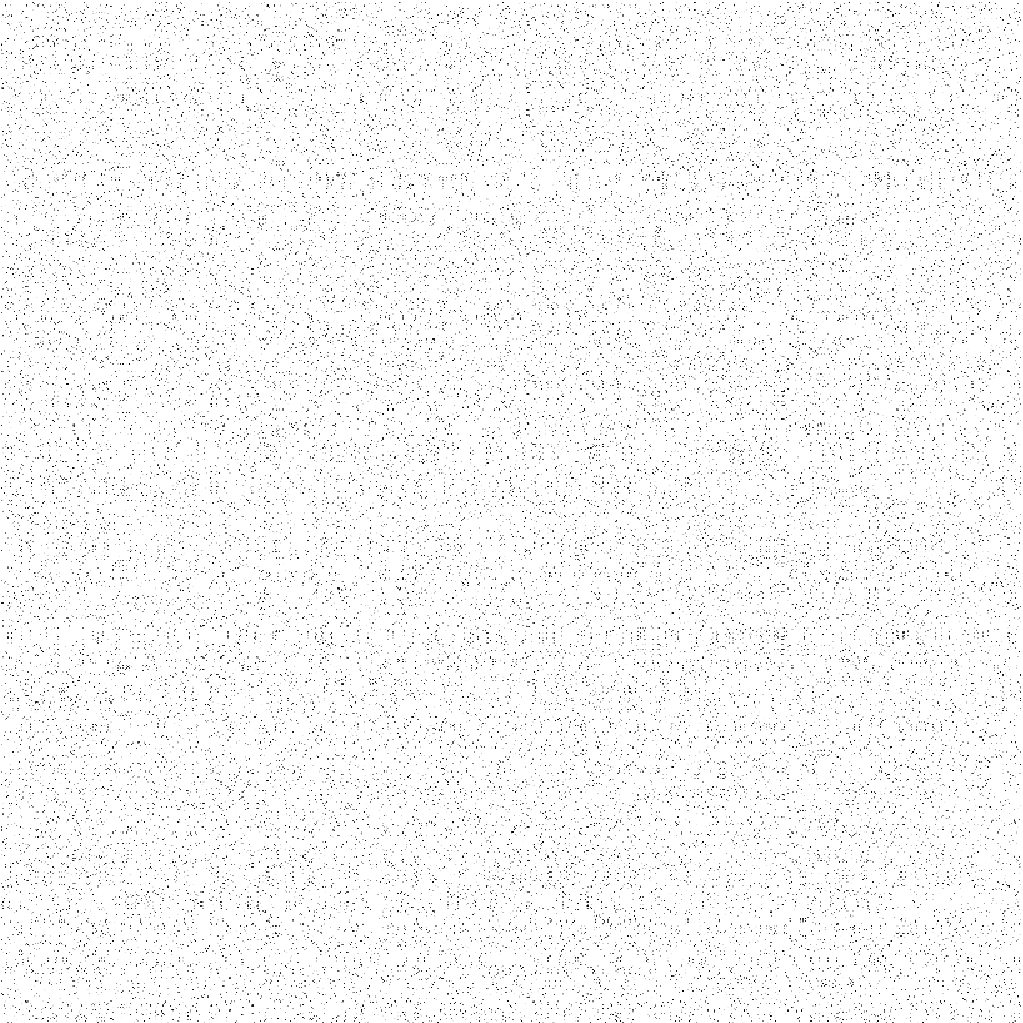} \label{subfig:shrouded}}
        \end{minipage}%
        \hfill%
        \begin{minipage}{0.45\textwidth}
            \subfloat[Share $f(\eval)$ of $\bfA^\prime$ with sparsity $s_d \approx 0.85$]{\includegraphics[width=\linewidth]{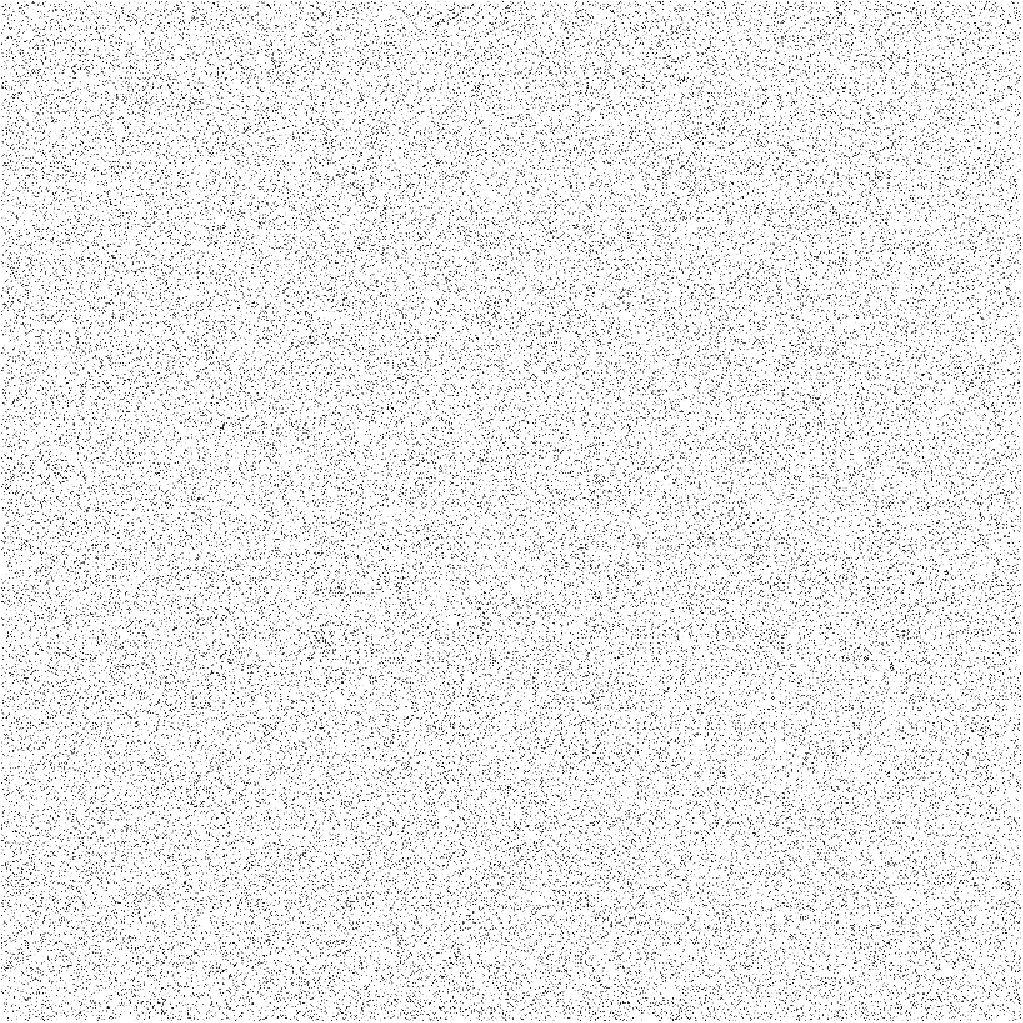}\label{subfig:share_perm}}
            \end{minipage}
    \end{minipage}
    \caption{A depiction of the impact of random permutations on the correlated entries of $\bfA$. The matrix with correlated entries is shown in \cref{subfig:orig}. Naively applying our sparse secret sharing scheme randomizes the values of the entries of $\bfA$ but does not break the correlation between the entries of the shares, cf. \cref{subfig:share}. The effect of applying the random permutation is shown in \cref{subfig:shrouded,subfig:share_perm} where the correlation between the entries is graphically broken.}
    \label{fig:random_permutations}
\end{figure}

The random permutation breaks most correlations between the entries of the input matrices. The colluding \workers will need to revert the permutation to leverage the statistical correlation between the entries of the private matrices. However, information-theoretic privacy guarantees cannot be given since the entries of $\bfA^\prime$ and $\bfB^\prime$ might still have some correlation. %

In~\cref{fig:random_permutations}, we depict a random permutation of an input matrix $\bfA_i \in \mathbb{F}_{89}^{816 \times 816}$ with sparsity $\sa\approx 0.94$ that follows a distribution similar to the one in~\cref{eq:PMF_A}, but with correlated entries. After pre-processing is applied, the correlations are graphically broken (see~\cref{subfig:orig} and~\cref{subfig:shrouded}). The process from \cref{subfig:orig} to \cref{subfig:share} represents the encoding phase as in~\cref{eq:polys} without shuffling. The correlation between the entries of the shares is shown graphically and can be leveraged by the colluding \workers after the input matrix is shuffled and encoded (\cref{subfig:shrouded} and \cref{subfig:share_perm}), then graphically, the correlation between the entries is broken.

This discussion shows that additional care must be taken when encoding matrices with correlated entries. Future research can be conducted to construct sparse secret-sharing schemes that break the correlation between the entries. Alternatively, privacy guarantees of sparsity-preserving pre-processing techniques can be analyzed.

\section{Conclusion}\label{sec:conclusion}
In this work, we opened the door to the coexistence of sparsity and privacy in secret sharing and distributed computing. We focused on the no-collusion regime, i.e., $z=1$, and assumed that the entries of the private matrices are independent and identically distributed. We showed a fundamental tradeoff between privacy and sparsity in the shares. Hence, insisting on perfect information-theoretic privacy does not allow for sparsity in the shares. We constructed sparse and private secret sharing schemes that take as input the desired sparsity and output shares with the provably lowest leakage possible. Using that, we constructed sparse and private matrix multiplication schemes for different settings of interest, some of which also include collusions. We discussed the correlation effect between the entries on the introduced methods.

Generalizing our results for $z>1$ remains an interesting open problem. Similarly, it is interesting to analyze the effect of lifting the assumption of independent entries and to construct schemes tailored for this setting.

\bibliographystyle{ieeetr}
\bibliography{IEEEabrv,Sparsejournal}

\appendix
\subsection{Proof of~\cref{lemma:perfectprivacy}}
\label{proof1}

\newcommand{\rvshare}[1][i]{\ensuremath{\mathrm{S}_{#1}}}
\begin{proof}
We focus only on the element-wise leakage since $\rvA$ has i.i.d. entries. The following analysis holds for any entry $\Aij$ and $\Rij$ of $\bfA$ and $\bfR$. Hence, we abuse notation and drop the $i,j$ index, i.e., we use $\rvA$ for $\Aij$ and $\rvR$ for $\Rij$. Let $\rvshare \triangleq \rvA+\coeff[i]\rvR$ be the corresponding entry of share $i \in [\nbworkers]$. It holds that %
\begin{equation}\label{eq:PMFshare}
\Pr_{\rvshare}(s) = \sum_{a=0}^{q-1} \Pr_\rvA(a)\Pr_{\rvR\vert\rvA} (\coeff[i]^{-1}(s-a)\vert a), \forall s \in \F_q \, .
\end{equation}
Then the entropy $\entropy(\rvshare)$  can be written as \footnote{We consider that $0\log(0) = 0$ for ease of notation.}
\begin{align*}
\entropy(\rvshare) &\overset{(a)}{=} - \sum_{s_j\in\F_q} \Pr_{\rvshare} (s_j) \log_q\left(\Pr_{\rvshare} ( s_j)\right) \\
&\overset{(b)}{=}  - \sum_{s_j\in\F_q} \left(\sum_{a\in\F_q} \Pr_{\rvA}(a)\Pr_{\rvR\vert \rvA} (\coeff[i]^{-1}(s_j-a)\vert a)\right)
\\&\qquad\log_q\left( \dfrac{\sum_{a\in\F_q} \Pr_{\rvA}(a)\Pr_{\rvR\vert\rvA}(\coeff[i]^{-1}(s_j-a)\vert a)}{\sum_{a\in\F_q} \Pr_{\rvA}(a)} \right) \\
&\overset{(c)}{\geq}  - \sum_{s_j\in\F_q} \sum_{a\in\F_q} \Pr_{\rvA}(a)\Pr_{\rvR\vert \rvA} (\coeff[i]^{-1}(s_j-a)\vert a) \\
&\qquad\log_q\left( \Pr_{\rvR\vert\rvA}( \coeff[i]^{-1}(s_j-a)\vert a) \right) \\
&\overset{(d)}{=}  \sum_{a\in\F_q} \Pr_{\rvA}(a) \entropy(\rvR \vert \rvA=a) \\
&= \entropy(\rvR \vert \rvA),
\end{align*}
where $(a)$ is the definition of $q-$ary entropy, $(b)$ is obtained by using \cref{eq:PMFshare} and dividing the argument of the logarithm function by $\sum_{t\in\F_q} \Pr_{\rvA}(t) = 1$, $(c)$ follows from the log-sum inequality and $(d)$ is obtained by grouping the summation terms in $(c)$. 

The log-sum inequality implies that $(c)$ is satisfied with equality if and only if $\Pr_{\rvR\vert \rvA}(\coeff[i]^{-1}(s_j-a)\vert a)$ is constant for every $a \in \F_q$, which has to hold for all $s_j \in \F_q$. That is, there exist $q$ disjoint sets (one for each $s_j\in \F_q$) $\big\{\Pr_{\rvR\vert \rvA}(\coeff[i]^{-1}(s_j-a)\vert a)\big\}_{a\in \F_q}$ whose entries must take the same value. %

Given two shares $\rvshare[1]$ and $\rvshare[2]$ with distinct evaluation points $\coeff[1]$ and $\coeff[2]$, we have two of such sets. For a fixed $s_1 \in \F_q$ (in the equation for $\rvshare[1]$) and for every $s_2\in \F_q$, there exists a tuple $(s_1, s_2, \bar{a})$ with $\bar{a}\in \F_q$ such that $\coeff[1]^{-1}(s_1-\bar{a}) = \coeff[2]^{-1}(s_2-\bar{a})$. In particular, the value of $\bar{a}$ can be calculated as

\begin{align*}
    \coeff[1]^{-1}(s_1-\bar{a}) &= \coeff[2]^{-1}(s_2-\bar{a}) \\
    \underbrace{\coeff[1]^{-1} \coeff[2]}_{\triangleq c \neq 0} s_1 + \bar{a} (1-\coeff[1]^{-1} \coeff[2]) &= s_2 \\[-.6cm]
    \bar{a} &= (s_2 - c s_1)/(1-c).
\end{align*}
This particular choice $\bar{a}$ shows that $$\big\{\Pr_{\rvR\vert \rvA}(\coeff[1]^{-1}(s_1-a)\vert a)\big\}_{a\in \F_q} \cap \big\{\Pr_{\rvR\vert \rvA}(\coeff[2]^{-1}(s_2-a)\vert a)\big\}_{a\in \F_q} \neq \emptyset.$$ 
Hence the elements of both sets must be equal, i.e., 
\begin{equation*}
\Pr_{\rvR\vert \rvA}(\coeff[1]^{-1}(s_1-a)\vert a) = \Pr_{\rvR\vert \rvA}(\coeff[2]^{-1}(s_2-a)\vert a)\quad \forall a\in \F_q,
\end{equation*}
to fulfill (c) with equality. Moving forward, by fixing $s_1$ one can find such intersections for all $s_2 \in \F_q$. Since the sum over $s_2$ in (c) covers constraints on the whole space of possible probability values $\Pr_{\rvR\vert \rvA}(r \vert a)$, we can link the constraints of the $q$ disjoint sets $\big\{\Pr_{\rvR\vert \rvA}(\coeff[2]^{-1}(s_2-a)\vert a)\big\}_{a\in \F_q}$ for all $s_2$ through the set $\big\{\Pr_{\rvR\vert\rvA}(\coeff[1]^{-1}(s_1-a)\vert a)\big\}_{a\in \F_q}$ for any fixed $s_1$. Consequently, all entries in the set $\big\{\Pr_{\rvR\vert\rvA}(\coeff[1]^{-1}(s-a)\vert a)\big\}_{a, s\in \F_q} = \big\{\Pr_{\rvR\lvert\rvA}(r\vert a)\big\}_{a, r\in \F_q}$ have to be equal to fulfill (c) with equality. The same steps can be repeated for more than two shares.
We can further write the leakage as
\begin{align*}
\mutinf(\rvA; \rvshare[i]) &= \entropy(\rvA+\coeff[i]\rvR) - \entropy(\rvA+\coeff[i]\rvR \lvert \rvA) \\  &= \entropy(\rvshare[i]) - \entropy(\rvR \vert \rvA) \geq 0,
\end{align*}
where equality is only given if (c) is fulfilled with equality, and hence $\rvR$ is independent of $\rvA$, i.e., $\Pr_{\rvR\vert\rvA}(r\vert a) = \Pr_{\rvR}(r)$, and is uniform.
\end{proof}

\subsection{Proof of~\cref{thm:optimal_pmf}}
\label{proof6}
\begin{proof}
We will use the method of Lagrange multipliers to combine the total leakage as quantified in~\cref{lemma:optimization} with the constraints in \cref{eq:constraint1,eq:constraint2,eq:constraint3,eq:constraint4}. The combination, called as the objective function that has to be minimized is quantified as 
\begin{align*}
    \lossabbrev &\triangleq \loss \\
    & = \totalleakage + \lambda_1 c_1(\pz,\pzinv) + \\
    & + \lambda_2 c_2(\pext,\pnz,\pnzinv) + \lambda_3 c_3(\pz,\pext) + \lambda_4 c_4(\pz,\pnz).
\end{align*}

This objective can be minimized by setting the gradient to zero, i.e., $\grad \lossabbrev = 0$, which is equivalent to solve a system of nine equations with nine unknowns. Simplifying $\nabla_{\pz,\pzinv,\pext,\pnz,\pnzinv} \lossabbrev = 0$ yields
\begin{align}
    \pz (\pnzinv)^2 &= \pzinv \pext \pnz. \label{eq:grad_obj_compact}
\end{align}

Using the constraints~\cref{eq:constraint1,eq:constraint2,eq:constraint3,eq:constraint4}, one can express $\pzinv, \pext, \pnz,\pnzinv$ as polynomials of degree one in $\pz$, and plugging them into \cref{eq:grad_obj_compact} transfers the problem into finding the root of a polynomial of degree $3$ in $\pz$ such that it satisfies the constraints $\sar-1+s \leq \pz s \leq \sar$ and $\sr-1+s \leq \pz s \leq \sr$. Let $\qfac \triangleq \frac{(q-2)^2}{q-1}$, then the polynomial denotes as $\alpha \pz^3 + \beta \pz^2 + \gamma \pz + \delta = 0$, where $\alpha = s^2 (4+\qfac)$, $\beta = 4s(1-s-\sar-\sr)-\qfac s (\sr+\sar+s)$, $\gamma = (1-s-\sar-\sr)^2+\qfac(s\sar + s\sr + \sar \sr)$ and $\delta = -\qfac \sar \sr$. Having $\pz$, the remaining unknowns $\pzinv,\pext,\pnz$ and $\pnzinv$ follow directly from \eqref{eq:constraint1}-\eqref{eq:constraint4}. Note that the problem is convex, then the solution is a global minima.
\end{proof}

\subsection{Sparsity of Two Shares Minimizing the Leakage}
\label{proof5}
\begin{lemma}
    \label{lemma:optimaldelta}
    Given a desired average sparsity level $\savg$, then the minimum total leakage is obtained for $\sr = \sar = \savg$.%
\end{lemma}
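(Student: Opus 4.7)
The approach I would take is to combine two structural properties of the optimization problem in \cref{lemma:optimization}: a symmetry of the optimal leakage in its two sparsity arguments, together with joint convexity of the optimal leakage as a function of these sparsities. Denote by $L^\star(\sr,\sar)$ the minimum of $\totalleakage$ over $(\pz,\pext,\pnz)$ subject to the constraints \eqref{eq:constraint1}--\eqref{eq:constraint4}.

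First, I would observe the symmetry $L^\star(\sr,\sar)=L^\star(\sar,\sr)$. The objective in \cref{lemma:optimization} together with its constraints is invariant under the simultaneous substitution $(\pext,\pnz)\leftrightarrow(\pnz,\pext)$ and $(\sr,\sar)\leftrightarrow(\sar,\sr)$: the $\pz$- and $\pnzinv$-contributions are already symmetric in $(\sr,\sar)$, the $\pext$-term maps onto the $\pnz$-term and vice versa, and the affine constraints $c_3,c_4$ swap into one another while $c_1,c_2$ are unaffected. Hence the whole program is invariant under this swap, which yields the claimed symmetry of its optimal value.

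Second, I would establish joint convexity of $L^\star(\sr,\sar)$ by an averaging argument on the convex program. Given optimizers associated with two pairs $(\sr^{(1)},\sar^{(1)})$ and $(\sr^{(2)},\sar^{(2)})$, their coordinate-wise average satisfies the affine constraints \eqref{eq:constraint1}--\eqref{eq:constraint4} with right-hand sides given by the corresponding averages. The objective $\totalleakage$ is convex in $(\pz,\pext,\pnz)$ because it equals $\mutinf(\rvR;\rvA)+\mutinf(\rvA+\rvR;\rvA)$, a sum of two mutual informations each of which is convex in the conditional PMF $\Pr_{\rvR\vert\rvA}$ for the fixed prior in \cref{eq:PMF_A} (observing that $\Pr_{\rvA+\rvR\vert\rvA}$ is itself an affine function of $\Pr_{\rvR\vert\rvA}$, which preserves convexity). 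This immediately yields
\begin{equation*}
L^\star\!\bigl(\tfrac{\sr^{(1)}+\sr^{(2)}}{2},\tfrac{\sar^{(1)}+\sar^{(2)}}{2}\bigr)\leq \tfrac{1}{2}L^\star(\sr^{(1)},\sar^{(1)})+\tfrac{1}{2}L^\star(\sr^{(2)},\sar^{(2)}).
\end{equation*}

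Finally, applying this inequality with $(\sr^{(1)},\sar^{(1)})=(\sr,\sar)$ and $(\sr^{(2)},\sar^{(2)})=(\sar,\sr)$, whose midpoint is $(\savg,\savg)$, and combining it with the symmetry established above gives $L^\star(\savg,\savg)\leq L^\star(\sr,\sar)$, which proves the claim. The main obstacle I anticipate is the clean bookkeeping for the value-function convexity, in particular verifying that the averaged probabilities remain in $[0,1]$ so that the convex combinations stay feasible; this can be handled by first working in the interior of the admissible region identified in \cref{thm:optimal_pmf} and then extending the conclusion to the boundary by continuity of $L^\star$.
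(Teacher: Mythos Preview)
Your argument is correct and takes a genuinely different route from the paper's.

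The paper parametrizes $\sr=\savg-\sdiff$, $\sar=\savg+\sdiff$, adjoins $\sdiff$ as an additional optimization variable to the Lagrangian in \cref{lemma:optimization}, and then computes the partial derivatives $\partial\lossabbrev/\partial\pext$, $\partial\lossabbrev/\partial\pnz$, $\partial\lossabbrev/\partial\sdiff$ explicitly. After a page of algebraic manipulation of these first-order conditions together with the constraints \eqref{eq:constraint1}--\eqref{eq:constraint4}, it verifies that $\sdiff=0$ is a stationary point and then appeals to convexity of the full problem to conclude it is the global minimizer.

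Your approach bypasses all of this calculus: you observe the symmetry $L^\star(\sr,\sar)=L^\star(\sar,\sr)$ via the swap $(\pext,\pnz,\sr,\sar)\leftrightarrow(\pnz,\pext,\sar,\sr)$, and then use midpoint convexity of the value function $L^\star$ (inherited from convexity of mutual information in the channel and affinity of the constraints) to get $L^\star(\savg,\savg)\leq L^\star(\sr,\sar)$ directly. This is cleaner and more conceptual; it also yields the inequality for \emph{every} pair $(\sr,\sar)$ averaging to $\savg$, not just a stationarity statement. The paper's calculation, on the other hand, produces as a by-product the explicit relation \eqref{eq:grad_obj_compact} between the optimal probabilities, which feeds into the proof of \cref{thm:optimal_pmf}, so its computations are not wasted.

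One remark: your stated obstacle is not really one. Since each optimizer $(\pz^{(i)},\pext^{(i)},\pnz^{(i)})$ has all entries (including the derived $\pzinv,\pnzinv$) in $[0,1]$, the coordinate-wise average automatically does too; no interior/boundary argument is needed.
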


\begin{proof}
    Given a desired average sparsity level $\savg = \frac{\sa+\sar}{2}$, we define $\sdiff$ such that $\sr = \savg -\sdiff$ and $\sar = \savg + \sdiff$ and determine the optimal of $\sdiff$ that minimizes the leakage. We utilize the method of Lagrange multipliers to combine the total leakage from \cref{lemma:optimization}, with the constraints in \cref{eq:constraint1,eq:constraint2,eq:constraint3,eq:constraint4}. Note that~\cref{eq:constraint3,eq:constraint4} have to be slightly updated by plugging in $\sr = \savg -\sdiff$ and $\sar = \savg + \sdiff$. Thus, the objective function to be minimized is a function of $\sdiff$ and can be expressed as
\begin{align*}
    \lossabbrev &\triangleq \totalleakagesdiff + \lambda_1 c_1(\pz,\pzinv) + \\
    & + \lambda_2 c_2(\pext,\pnz,\pnzinv) + \lambda_3 c_3(\pz,\pext) + \lambda_4 c_4(\pz,\pnz).
\end{align*}

This objective is minimized by setting the gradient to zero, i.e., $\grad \lossabbrev = 0$, which is equivalent to solving a system of eleven equations with eleven unknowns. %

This is a convex optimization problem, since the objective function is a sum of convex functions and the constraints are affine in the unknowns. The function $\totalleakagesdiff$ is convex in $\sdiff$ as shown in the composition theorem~\cite[Ch. 3.2.4]{boyd2004convex} since $-\log(x)$ is convex and $g(x) = \savg \pm \sdiff$ is convex. Thus, $-\log(\savg \pm \sdiff)$ is convex. %

We now state the system of equations given by $\nabla_{\pext,\pnz,\sdiff} \lossabbrev = 0$:
\begin{align}
    \frac{\partial \lossabbrev}{\partial \sdiff} &= s \bigg[ \pz \left(-\frac{1}{\sar} + \frac{1}{\sr}\right) + \nonumber \\
    & + (q-1)\pzinv \left( \frac{1}{1\!-\!\sar} - \frac{1}{1\!-\!\sr}\right)\bigg] + \nonumber \\
    & + (1-s) \! \bigg[ p_2 \! \left( \frac{1}{1\!-\!\sar} \! + \! \frac{1}{\sr} \right) \! - \! p_3 \left( \frac{1}{\sar} \! + \! \frac{1}{1\!-\!\sr} \right) \nonumber \\
    & + (q-2) \ptwoinv \! \left( \frac{1}{1\!-\!\sar} - \frac{1}{1\!-\!\sr}\right) \bigg] \! + \! \lambda_3 \! - \! \lambda_4 = 0 \label{eqline:sys1} \\
    \frac{\partial \lossabbrev}{\partial p_2} &= (1-s)\left(2\log(p_2) - \log(\sr) - \log(\sarinv) + 2\right)  \nonumber \\
    &+ \lambda_2 + (1-s) \lambda_3 = 0 \label{eqline:partial_p2} \\
    \frac{\partial \lossabbrev}{\partial p_3} &= (1-s)\left(2\log(p_3) - \log(\srinv) - \log(\sar) + 2\right) \nonumber \\
    &+ \lambda_2 + (1-s) \lambda_4 = 0 \label{eqline:partial_p3}
\end{align}
We will first calculate $\lambda_3 - \lambda_4$ from \eqref{eqline:partial_p2} and \eqref{eqline:partial_p3} to be later substituted in \eqref{eqline:sys1}. By computing $\eqref{eqline:partial_p2} - \eqref{eqline:partial_p3}$ and dividing the result by $(1-s)$, we obtain
\begin{align}
    &\lambda_3-\lambda_4 = \nonumber \\ 
    &= 2 (\log(p_3) - \log(p_2)) + \log\left(\frac{\sarinv}{\sar}\right) + \log\left(\frac{\sr}{\srinv}\right) \! \label{eqline:lambda_diff1} \\
    &= 2\log\left(\frac{\savg+\sdiff-s p_1}{1-s}\right) - 2\log\left(\frac{\savg-\sdiff-s p_1}{1-s}\right) + \nonumber \\
    &+ \log\left(\frac{1-\savg-\sdiff}{(q-1)(\savg+\sdiff)}\right) + \log\left(\frac{(q-1)(\savg-\sdiff)}{1-\savg+\sdiff}\right) \label{eqline:lambda_diff2} \\ 
    &= 2 \log\! \left(\! \frac{\savg+\sdiff-s p_1}{\savg-\sdiff-s p_1} \! \right) \!\! + \! \log\!\left(\! \frac{(1-\savg-\sdiff)(\savg-\sdiff)}{(1-\savg+\sdiff)(\savg+\sdiff)} \! \right) %
    \label{eqline:lambda34}
\end{align}
where from \eqref{eqline:lambda_diff1} to \eqref{eqline:lambda_diff2} we used \eqref{eq:constraint3} and \eqref{eq:constraint4} to express $p_2$ and $p_3$ in terms of $p_1$, respectively.
Simplifying \eqref{eqline:sys1} by use of $\nabla_{\lambda_1\lambda_2,\lambda_3,\lambda_4,\lambda_5} \lossabbrev = 0$, i.e., the constraints in \eqref{eq:constraint1}-\eqref{eq:constraint4} yields
\begin{align}
    &\frac{\partial \lossabbrev}{\partial \sdiff} - (\lambda_3 - \lambda_4) \nonumber \\
    & = \frac{1}{\sr} (s p_1 + (1-s) p_2) - \frac{1}{\sar} (s p_1 + (1-s)p_3)  + \nonumber \\
    & + \frac{1}{1-\sar} \underbrace{(s(q-1) \pzinv + (1-s) p_2 \! + \! (1-s)(q-2)\ptwoinv)}_{(a)} \nonumber \\
    & - \frac{1}{1-\sr} \underbrace{((q-1)\pzinv + (1-s)p_3 + (q-2)(1-s)\ptwoinv)}_{(b)} \nonumber \\
    &= \frac{\sr}{\sr} -\frac{\sar}{\sar} + \frac{1-\sar}{1-\sar} - \frac{1-\sr}{1-\sr} = 0, \label{eqline:partial_sdiff_simplified}
\end{align}
where we used that $\sr = s p_1 + (1-s) p_2$ from \eqref{eq:constraint3}, $\sar = s p_1 + (1-s) p_3$ from \eqref{eq:constraint4}. Further, we obtain $1-\sar = (a)$ by solving \eqref{eq:constraint1} for $p_1$, substituting $p_1$ in \eqref{eq:constraint4} and solving for $p_3$ and finally inserting in \eqref{eq:constraint2}, multiplying by $(1-s)$ and solving for $1-\sar$. Lastly, we obtain $1-\sr = (b)$ by solving \eqref{eq:constraint1} for $p_1$, substituting $p_1$ in \eqref{eq:constraint3} and solving for $p_2$ and finally inserting in \eqref{eq:constraint2}, multiplying by $(1-s)$ and solving for $1-\sr$.
Putting \eqref{eqline:lambda34} and \eqref{eqline:partial_sdiff_simplified} together, we obtain
\begin{align*}
    \frac{\partial \lossabbrev}{\partial \sdiff} + \lambda_3 - \lambda_4 = 2& \log\left(\frac{\savg+\sdiff-s p_1}{\savg-\sdiff-s p_1} \right) + \\
    &+ \log\left(\frac{(1-\savg-\sdiff)(\savg-\sdiff)}{(1-\savg+\sdiff)(\savg+\sdiff)} \right) = 0. 
\end{align*}
Sine the optimization is convex, it follows that $\sdiff=0$ minimizes the objective in \cref{lemma:optimization}. %
\end{proof}

\subsection{Proof of~\cref{thm:optimal_pmf_straggler_tol}}
\label{proof8}
\begin{proof}
Being a convex optimization problem with affine constraints, we utilize the method of Lagrange multipliers \cite{rockafellar1993lagrange} to combine the leakage in \cref{eq:leakage_straggler_tol} with the constraints in \cref{eq:gconstraint1strag,eq:gconstraint2strag,eq:gconstraint3strag}. Then the objective function is
\begin{align}
    \lossabbrev \triangleq & \totalleakagestragglers  + \lambda_1 c_1(\pz,\pzinv) \nonumber \\ %
    &+ \lambda_2 c_2(\pc,\pcinv) + \lambda_3 c_3(\pz,\pc), \label{eq:leakage}
\end{align}
where $\totalleakagestragglers$ is the leakage defined in the optimization problem in \cref{lemma:leakage_straggler_tol}, which results from~\cref{def:model} (similar to the~\cref{lemma:optimization}). Then the minimization problem of \cref{lemma:leakage_straggler_tol} can be solved by setting the gradient of the objective in \eqref{eq:leakage} to zero, i.e., $\gradstragglers \lossabbrev = 0$, which amounts to solving a system of seven equations with seven unknowns. We utilize a subset of those equations
and one can show that given the objective $\lossabbrev$, we obtain from $\nabla_{\pz,\pzinv,\pc,\pcinv} \lossabbrev = 0$ the relation
\begin{align}
    \pz (\pcinv)^\shares &= \pzinv \pc^\shares. \label{eq:grad_obj_compact}
\end{align}
To obtain this result, we use that $\frac{\partial x\log(\frac{x}{y})}{\partial x} = \log(x)-\log(y)+1$. Then, from \eqref{eq:leakage} and the objective and constraints in~\cref{lemma:leakage_straggler_tol}%
, we obtain the following non-linear system of equations for $\nabla_{\pz,\pzinv,\pc,\pcinv} \lossabbrev = 0$:
\begin{align*}
    \frac{\partial \lossabbrev}{\partial p_1} &= s\left(\log(p_1) - \log(\sd) + 1 \right) + \lambda_1 + s\lambda_3 = 0, \\
    \frac{\partial \lossabbrev}{\partial p_1^{\text{inv}}} &= s(q-1)\left(\log(p_1^{\text{inv}}) - \log(\sdinv) + 1\right) + (q-1) \lambda_1 = 0, \\
    \frac{\partial \lossabbrev}{\partial \pc} &= \begin{aligned}[t] & (1-s)\Big(\log(\pc) - \log(\sd) + 1 +  (\shares-1)\\
    & \cdot  ( \log (\pc) - \log(\sdinv) + 1 ) \Big) \! + \! \shares \lambda_2 + (1-s) \lambda_3 = 0, \end{aligned} \\
    \frac{\partial \lossabbrev}{\partial \pcinv} &= \begin{aligned}[t] & (1-s)(q-\shares) \left(\log(\pcinv) - \log(\sdinv) + 1\right) \\
    & + (q-\shares) \lambda_2 = 0. \end{aligned}
\end{align*}
By scaling, we simplify and reformulate this system of equations and obtain
\begin{align}
    0 &= \log(p_1) - \log(\sd) + 1 + \lambda_1^\prime + \lambda_3^\prime, \label{eqline:stragpartial_p1} \\
    0 &= -\log(p_1^{\text{inv}}) + \log(\sdinv) - 1 - \lambda_1^\prime, \label{eqline:stragpartial_p1inv} \\
    0 &= \begin{aligned}[t] - \shares \log(\pc) + &\log(\sd) + (\shares-1) \log(\sdinv) \\& - \shares - \shares \lambda_2^\prime - \lambda_3, \end{aligned} \label{eqline:stragpartial_pc} \\
    0 &= \shares \log(\pcinv) - \shares \log(\sdinv) + \shares + \shares \lambda_2^\prime, \label{eqline:stragpartial_pcinv}
\end{align}
where $\lambda_1^\prime = \lambda_1/s$, $\lambda_2^\prime = \lambda_2/(1-s)$ and $\lambda_3^\prime = \lambda_3/s$. 
Summing \eqref{eqline:stragpartial_p1}-\eqref{eqline:stragpartial_pcinv} 
leads to $\log\left(\pz {\pcinv}^\shares\right) - \log\left(\pzinv \pc^\shares\right)=0$. Exponentiating both sides yields the result stated in \eqref{eq:grad_obj_compact}. The condition on the root $\pc$ is a consequence of \eqref{eq:gconstraint2strag} and \eqref{eq:gconstraint3strag}.

Then we use \eqref{eq:grad_obj_compact} and $\nabla_{\lambda_1, \lambda_2, \lambda_3} \lossabbrev = 0$, where the latter system of equations is linear and equivalent to \eqref{eq:gconstraint1strag}-\eqref{eq:gconstraint3strag}. We solve \cref{eq:gconstraint1strag,eq:gconstraint2strag,eq:gconstraint3strag} for $\pz,\pzinv,\pcinv$ as a function of $\pc$ to obtain
\begin{align}
    (q-1) \frac{\sd-(1-s)\pc}{s-\sd+(1-s)\pc} = \left((q-\shares)\frac{\pc}{1-\shares\pc}\right)^\shares\!\!\!\!. \label{eq:grad_obj_compact_stragglers}
\end{align}
We reformulate \eqref{eq:grad_obj_compact_stragglers} to get the following polynomial in $\pc$ with degree $\shares+1$:
\begin{align*}
    -\pc^{\shares+1} - \! s_2 \pc^\shares + c s_1 \! \sum_{k=1}^\shares \binom{\shares}{k} (-\shares \pc)^k - \pc c \! \sum_{k=0}^\shares (-\shares \pc)^k.
\end{align*}
Sorting the coefficients by powers of $\pc$ results in the root finding problem given in \cref{thm:optimal_pmf_straggler_tol}. %
The results for $\pz,\pzinv$ and $\pcinv$ can be obtained from $\nabla_{\lambda_1,\lambda_2,\lambda_3} \lossabbrev = 0$, i.e., from the constraints stated in \cref{eq:gconstraint1strag,eq:gconstraint2strag,eq:gconstraint3strag}. The condition on the root $\pc$ is a direct consequence of \cref{eq:gconstraint2strag,eq:gconstraint3strag}. 
\end{proof}

\balance

\end{document}